\newcommand{\tabincell}[2]{\begin{tabular}{@{}#1@{}}#2\end{tabular}}
\newcommand{\secref}[1]{Sec. \ref{#1}}
\theoremstyle{plain}
\newtheorem{remark}{Remark}
\theoremstyle{plain}
\newtheorem{theorem}{Theorem}
\theoremstyle{plain}
\newtheorem{proposition}{Proposition}
\theoremstyle{plain}
\theoremstyle{plain}
\newtheorem{assumption}{Assumption}
\theoremstyle{plain}
\newtheorem{lemma}{Lemma}
\begin{document}

% Put a box at the end of the proof
\def\QEDclosed{\mbox{\rule[0pt]{1.3ex}{1.3ex}}}
\def\QEDopen{{\setlength{\fboxsep}{0pt}\setlength{\fboxrule}{0.2pt}\fbox{\rule[0pt]{0pt}{1.3ex}\rule[0pt]{1.3ex}{0pt}}}}
\def\QED{\QEDopen}
\def\proof{}
\def\endproof{\hspace*{\fill}~\QED\par\endtrivlist\unskip}

\title{A Partial Reciprocity-based Channel \\
Prediction Framework for FDD Massive \\
MIMO with High Mobility}

\author{Ziao~Qin, Haifan~Yin, Yandi~Cao, Weidong~Li, and David~Gesbert,~\IEEEmembership{Fellow,~IEEE} 
\thanks{Z. Qin, H. Yin, Y. Cao and W. Li are with Huazhong University of Science and Technology,
430074 Wuhan, China (e-mail: ziao\_qin@hust.edu.cn, yin@hust.edu.cn, yandicao@hust.edu.cn, weidongli@hust.edu.cn).}
\thanks{D. Gesbert is with EURECOM, 06410 Biot, France (e-mail: gesbert@
eurecom.fr).}
\thanks{This work was supported by the National Natural Science Foundation of China under Grant 62071191. The corresponding author is Haifan Yin.}}

% make the title area
\maketitle

% As a general rule, do not put math, special symbols or citations
% in the abstract or keywords.
\begin{abstract}

Massive multiple-input multiple-output (MIMO) is believed to deliver unrepresented spectral efficiency gains for 5G and beyond. However, a practical challenge arises during its commercial deployment, which is known as the ``curse of mobility''. The performance of massive MIMO drops alarmingly when the velocity level of user increases. In this paper, we tackle the problem in frequency division duplex (FDD) massive MIMO with a novel Channel State Information (CSI) acquisition framework. A joint angle-delay-Doppler (JADD) wideband precoder is proposed for channel training. Our idea consists in the exploitation of the partial channel reciprocity of FDD and the angle-delay-Doppler channel structure. More precisely, the base station (BS) estimates the angle-delay-Doppler information of the UL channel based on UL pilots using Matrix Pencil (MP) method. It then computes the wideband JADD precoders according to the extracted parameters. Afterwards, the user estimates and feeds back some scalar coefficients for the BS to reconstruct the predicted DL channel. Asymptotic analysis shows that the CSI prediction error converges to zero when the number of BS antennas and the bandwidth increases. Numerical results with industrial channel model demonstrate that our framework can well adapt to high speed (350 km/h), large CSI delay (10 ms) and channel sample noise.

\end{abstract}

% Note that keywords are not normally used for peer review papers.
\begin{IEEEkeywords}
Massive MIMO, curse of mobility, channel prediction, FDD, angle-delay domain, partial reciprocity, Matrix Pencil, 5G.
\end{IEEEkeywords}

\IEEEpeerreviewmaketitle

\section{Introduction}\label{sec1}

\IEEEPARstart{T}{he} 5G wireless communication is being deployed in real life and is given great expectations on high throughput rate, low latency and high reliability. To achieve such intriguing merits of 5G, massive MIMO technology is indispensable. These benefits are brought by the large numbers of antennas at the BS side while eliminating uncorrelated noise and fast fading \cite{2010MazattaMIMO}. Massive MIMO system has shown great potential in improving spectral efficiency (SE) and energy efficiency (EE) \cite{2013MazattaSE}. Even though the pilot contamination problem limits massive MIMO system performance \cite{2011MazettaPilot}, this effect can be mitigated by exploiting the angular structure of channel \cite{2013YinJSAC} and differences of the channel power \cite{2014BlindRuf}.

High SE performance depends on accurate CSI. Thanks to the channel reciprocity of TDD, CSI can be obtained by an acceptable pilot training overhead which scales with the number of user equipments (UEs) instead of the number of BS antennas. Therefore, TDD mode may be the favorable choice for massive MIMO system. However, a large percent of current cellular communication system operates in FDD mode, thus massive MIMO operating in FDD mode has equal importance. The authors in \cite{2018TDDvsFDD} measured the performance at 2.6 GHz in the two modes and conclude that each enjoys its own advantages in different scenarios. 
Unfortunately, the CSI acquisition in FDD mode is more challenging due to the non-reciprocal UL and DL channel, and therefore the training and feedback overhead. Many research works have offered possible solutions to CSI acquisition in FDD massive MIMO system. The authors in \cite{2013JSDM} and \cite{2020JSDM} utilized a statistical channel information and user grouping based prebeamformer to reduce pilot training and feedback overhead, which is known as the ``Joint  spatial  division  and  multi-plexing'' (JSDM) method. The low-rankness property of channel correlation matrices was considered to design pilot training and feedback under a spatial correlation channel model in \cite{2015spatial}. Spatial sparsity of massive MIMO channel can also be exploited through the compressed sensing (CS) method. In \cite{2019mmimoCS}, the authors estimated the channel by extracting channel parameters through CS method in millimeter-wave massive MIMO. Another possible approach to address FDD massive MIMO channel estimation is based on channel parameter extraction. By exploiting angle information through a discrete Fourier transform (DFT) projection \cite{2017gaofeiJSAC}, the DL channel can be reconstructed through angular information and channel gain which are estimated separately. The authors in \cite{2019hanEfficient} introduced a Newtonzied orthogonal matching pursuit (NOMP) method to detect angle, delay and gains and reconstructed the channel following a multipath channel model. Deep learning method was also utilized to reconstruct the DL channel \cite{2020DeepHan}. However, the papers above mainly considered a block fading scenario where the channel was {assumed to be} constant for a period of time. This assumption is reasonable in a stationary or low-mobility scenario. 

However, in practice, the system performance may degrade badly in mobility scenarios \cite{2019mobilityReport, 2020yinMobility} even in TDD mode. This effect is caused by the time-varying nature of channel. The outdated CSI severely corrupts the SE performance. Unlike in stationary settings, Doppler frequency shift becomes nontrivial in {mobile} environments. The authors in  \cite{2020Dataprediction} proposed a data-aided channel prediction based on variational Bayesian inference (VBI) framework in high mobility scenario. A maximum-likelihood based method is introduced in \cite{2017V2V} to estimate channel parameters in vehicle-to-vehicle (V2V) MIMO system. Some works addressed the CSI delay influence on the channel in a theoretical view \cite{2012csidelay,2014csidelay}. In \cite{2020yinMobility}, the authors proposed a channel prediction method to solve the the mobility problem utilizing Prony-based angle-delay domain channel prediction. The authors of \cite{2021dl_mobility} addressed the mobility problem in massive MIMO from a deep learning view. Nevertheless these papers mainly focused on TDD mode. 

Different from TDD mode, 
{closed-loop feedback of CSI from the UE to the BS is inevitable, which introduces CSI quantization error, in addition to even larger CSI delay. Especially in high-mobility scenario, the channel coherence time is much shorter than the low-mobility scenario and timely feedback is more challenging in FDD mode due to the different operating frequency bands between UL and DL.} Worse still, the training and feedback overhead are much heavier than TDD, and thus has to be reduced. The state-of-the-art algorithms like CS \cite{2019mmimoCS}, deep learning \cite{2021dl_mobility} and JSDM \cite{2020JSDM} method mainly focus on reducing the pilot training and feedback overhead. Some research works utilized maximum-likelihood method \cite{2017V2V}, deep learning \cite{2021dl_mobility} and machine learning method \cite{2021KF_ML} to address the CSI aging problem in TDD. 
{The authors in \cite{2021TVT} utilized partial channel reciprocity in terms of the angular support to facilitate the CSI feedback in TDD for the case that the UE has unequal number of TX and RX antennas. 
In \cite{2021reconstruct}, a channel reconstruction method based on CSI-RS and  SRS in TDD system was proposed.} 
However, these methods did not consider the mobility problem in FDD and the given solutions were mostly NP-hard. As the high mobility demands timely CSI acquisition and high efficiency of channel prediction algorithm, these methods cannot directly apply in FDD massive MIMO with high-mobility. {Recently, some works like \cite{2019jiangtaoJSTSP,2020DeepHan} proposed channel prediction methods for FDD massive MIMO. Unfortunately, the performances of these methods may not be guaranteed in a rich scattering environment with a large number of multipath, especially in high-mobility scenario.} To the best of our knowledge, few works have addressed these real-world problems simultaneously in a practical multipath channel model. 

In this paper, we aim to solve this problem with a novel CSI acquisition framework which is easy to deploy and has  polynomial complexity. Even though the full channel reciprocity in FDD is not available like TDD, some frequency-unrelated channel parameters are reciprocal between the DL channel and the UL channel  \cite{reciprocity2002spatial,3gpp:36.897,2017GaoAngle}. {Through the channel measurement campaigns, the partial reciprocity in FDD was verified in \cite{2020GlobalCom}}. The partial reciprocity allows us to extract some useful information from the UL channel estimation, e.g., the angle, delay, and Doppler frequency shift. We propose to extract such information through an efficient linear prediction method known as MP \cite{1990MP}. 
Once the information is obtained, we design a JADD spatial-frequency precoders for the wideband DL pilot transmission. The precoders capitalize on the channel sparsity in angle-delay domain, as well as the partial reciprocity. They will help reduce the training overhead and facilitate the DL channel reconstruction. 
Note that different from existing methods, our precoder are wideband and require joint operation from the BS and the UE. Afterwards, the UE estimates some complex scalar coefficients based on the precoded DL training signal and feeds them back to the BS. Finally the BS reconstructs the DL CSI using the coefficients and the extracted UL channel parameters. 

{Different from previous channel reconstruction methods like \cite{2019jiangtaoJSTSP,2019hanEfficient,2021reconstruct,2020DeepHan,2020yinMobility}, we devise a wideband precoder and JADD feedback framework.} Our framework outperforms traditional methods which are typically based on the NP-hard solutions or failing to timely update CSI. Moreover, our approach is capable of predicting the channel in polynomial complexity. Simulation results under the 3rd Generation Partner Project (3GPP) channel model indicate that our proposed framework is robust to high mobility scenarios with even 350 km/h of UE speed and to large CSI delay. Moreover, we test our framework in different scattering environments, BS antenna configurations and noisy channel sample cases. The numerical results demonstrate the robustness of our framework.

Our main contributions are
\begin{itemize}
    \item We address the mobility problem of FDD massive MIMO under an industrial multipath channel model, which was rarely considered in the literature. By exploiting the angle-delay-Doppler structure and the partial reciprocity of the channel, we propose a JADD CSI acquisition framework, which combats the outdated DL CSI and reduces the training overhead simultaneously. 
    
    \item We propose to extract the Doppler frequency shifts using the MP method in angle-delay domain, where the channel shows more sparsity. This method requires less channel samples and achieves high accuracy for the Doppler estimation, due to the high spatial and frequency resolution of a wideband massive MIMO system. 
    
    \item We propose a novel training and feedback framework for FDD massive MIMO. The key ingredients are a wideband precoder for DL pilots and the computation of the complex coefficients of the DL paths at the UE side. This precoding method requires a two-step joint operation of the BS and the UE. Only scalar coefficients need to be fed back to the BS. In this framework, the training and feedback overhead no longer depends on the number of the BS antennas and bandwidth, but on the angle-delay sparsity of the channel and the prediction order of the MP method.
    
    \item We derive the upper bound of the DL channel prediction error under limited BS antennas and bandwidth. Our asymptotic analysis shows the channel prediction error converges to zero when the number of antennas at the BS and the bandwidth increase while only two UL channel samples are needed. We also suggest the choice of the prediction order when applying our method. 
    
\end{itemize}

The rest of the paper is organized as follows. \secref{sec2} introduces our channel model. \secref{sec3} demonstrates the UL channel parameters estimation method. \secref{sec4} discusses  the DL pilot training, feedback, and DL channel reconstruction. \secref{sec5} contains the performance analysis of our proposed framework. \secref{sec6} shows the numerical results of our framework. \secref{sec7} is the conclusion of our work.

Notations: The boldface front stands for vector and matrix. $ \otimes $ is Kronecker product symbol. $\rm{diag}\left(\bf{X}\right)$ means a diagonal matrix with $\bf{X}$ as its diagonal elements and if $\bf{X}$ is a block matrix, $\rm{diag}\left(\bf{X}\right)$ denotes a block diagonal matrix. $\bf{vec\left(\bf{X}\right)}$ is the vectorization of $\bf{X}$. ${{\bf{X}}^\dag },{{\bf{X}}^T},$ and ${{\bf{X}}^H}$ denote the Moore-Penrose inversion, transpose and conjugation of $\bf{X}$, respectively. ${\mathbb{C}^{a \times b}}$ is a matrix space with $a$ rows and $b$ columns. $\left| {\bf{x}} \right|$ denotes the absolute value of $\bf{x}$ and ${\left\| {\bf{X}} \right\|_2}$ is the second-order induced norm of $\bf{X}$. $\Re\left(\alpha\right)$ denotes the real part of complex $\alpha$. ${\rm{mod}}\left(x\right)$ is the modular operation of $x$. $ \buildrel \Delta \over = $ refers to the definition symbol. $\mathbb{E}\left\{ x \right\}$ means calculating expectation of $x$. ${{\bf{R}}} \sim {\mathcal {CN}}\left( {0,{\sigma}^2{\bf{I}}} \right)$ means that ${\bf{R}}$ satisfies zero-mean complex circular Gaussian distribution.
\section{System model} \label{sec2}
This paper considers a wideband FDD massive MIMO system where the BS is equipped with a uniform planar array (UPA). The classical orthogonal frequency division multiplex (OFDM) modulation is adopted with ${N_f}$ sub-carriers and a ${f_\Delta }$ subcarrier spacing. The number of BS antennas is ${N_t}= {N_v}{N_h}$, where ${N_v}$ and ${N_h}$ denote the number of antennas in a row and in a column, respectively. The center frequencies of UL and DL are ${f^u}$ and ${f^d}$, respectively. 

A multi-path channel model following \cite{3gpp901} is adopted in our work. The number of paths of the channel is denoted by $P$. The corresponding parameters of each path $p$ are the complex amplitude ${\beta _p}$, steering vector ${{\boldsymbol{\alpha}}^u}\left( {\theta _{p}^u,\phi _{p}^u} \right)$, Doppler frequency shift ${\omega _p}$, and delay ${\tau _p}$. Therefore, the UL channel between the BS and the UE $k$ at a certain time $t$ and frequency $f$ is 
\begin{equation}\label{ul mutiplath channel }
     {\bf{h}}_{k,r}^u\left( {t,f} \right) = \sum\limits_{p = 1}^P {\beta _{p}^u{{\boldsymbol{\alpha}}^u}\left( {\theta _{p}^u,\phi _{p}^u} \right){e^{ - j2\pi f\tau _{p}^u}}{e^{jw_{p}^ut}}}, 
\end{equation}
where the subscript $k,r$ means the $r$-th antenna of the UE $k$ and the superscript $u$ denotes the UL channel. For simplicity, we drop the subscripts $r$ and $k$ here and afterwards. The UL Doppler frequency shift is defined as $w_p^u = {{v\cos \varphi _p^u{f^u}} \mathord{\left/{\vphantom {{v\cos \varphi _p^u{f^u}} c}} \right.\kern-\nulldelimiterspace} c}$, where $v$ is the velocity of the UE and $\varphi _{p}^u$ is the angle between path $p$ and the 3D velocity vector of the UE $\bf{v}$. $c$ is the speed of light. Denote the zenith angle and azimuth angle by $\theta _{p}^u,\phi _{p}^u$, respectively. Fig. \ref{fig_system model} demonstrates the UPA antenna configuration in 3D-Cartesian coordinate system, the zenith angle $\theta$, the azimuth angle $\phi$, speed direction angle $\varphi$ which is the angle between the path and the velocity vector of the UE. The transmit steering vector is ${{\boldsymbol{\alpha}}^u}\left({\theta _{p}^u,\phi _{p}^u} \right)\in{\mathbb{C}^{{N_t} \times 1}}$ and is modeled as the Kronecker product of the vertical steering vector ${\boldsymbol{\alpha}}_v^u\left({\theta _{p}^u} \right)$ and the horizontal steering vector ${\boldsymbol{\alpha}}_h^u\left( {\theta _{p}^u,\phi _{p}^u} \right)$  
\begin{figure}[!t]
\centering
\includegraphics[width=3.2in]{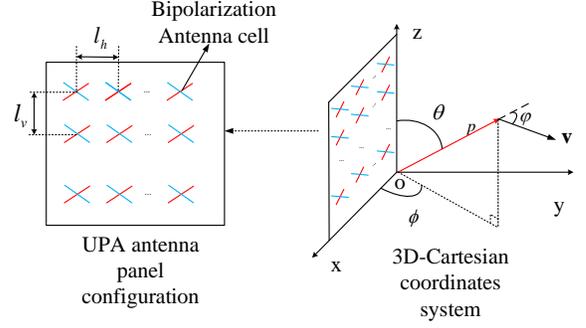}
\caption{UPA antenna configuration in 3D-Cartesian coordinate system.}
\vspace{-0.3cm}
\label{fig_system model}
\end{figure}
\begin{equation}\label{ul steer vector}
    {{\boldsymbol{\alpha}}^u}\left( {\theta _{p}^u,\phi _{p}^u} \right) = {\boldsymbol{\alpha}}_h^u\left( {\theta _{p}^u,\phi _{p}^u} \right) \otimes {\boldsymbol{\alpha}}_v^u\left( {\theta _{p}^u} \right),
\end{equation}
where
\begin{equation}\label{ul horizontal steer vector}
{\boldsymbol{\alpha}}_h^u\left( {\theta _{p}^u,\phi _{p}^u} \right) = \left[ {\begin{array}{*{20}{c}}
1\\
{{e^{j2\pi \frac{{{l_h}{f^u}}}{c}\cos \theta _{p}^u\cos \phi _{p}^u}}}\\
 \cdots \\
{{e^{j2\pi \frac{{{l_h}{f^u}}}{c}\left( {{N_h} - 1} \right)\cos \theta _{p}^u\cos \phi _{p}^u}}}
\end{array}} \right],
\end{equation}
\begin{equation}\label{ul vertical steer vector}
    {\boldsymbol{\alpha}}_v^u\left( {\theta _{p}^u} \right) = \left[ {\begin{array}{*{20}{c}}
1\\
{{e^{j2\pi \frac{{{l_v}{f^u}}}{c}\cos \theta _{p}^u}}}\\
 \cdots \\
{{e^{j2\pi \frac{{{l_v}{f^u}}}{c}\left( {{N_v} - 1} \right)\cos \theta _{p}^u}}}
\end{array}} \right],
\end{equation}
and ${l_v},{l_h}$ are the spacing between the antennas in vertical direction and horizontal direction, respectively. Similarly, the DL channel is modeled as
\begin{equation}\label{dl multipath channel}
     {\bf{h}}^d\left( {t,f} \right) = \sum\limits_{p = 1}^P {\beta _{p}^d{{\boldsymbol{\alpha}}^d}\left( {\theta _{p}^d,\phi _{p}^d} \right){e^{ - j2\pi f\tau _{p}^d}}{e^{ - j2\pi \left( {{f^d} - {f^u}} \right)\tau _{p}^d}}{e^{jw_{p}^dt}}},  
\end{equation}
where $d$ stands for the DL channel. 

Unlike TDD, in FDD only some parameters of the UL and DL channels are reciprocal \cite{3gpp:36.897}
\begin{equation}\label{reciprocity}
    \tau _{p}^u = \tau _{p}^d,\theta _{p}^u = \theta _{p}^d,\phi _{p}^u = \phi _{p}^d,\frac{{w_{p}^u}}{{w_{p}^d}} = \frac{{{f^u}}}{{{f^d}}}.
\end{equation}
The DL steering vector ${{\boldsymbol{\alpha}}^d}\left( {\theta _p^d,\phi _p^d} \right)$ is frequency-related and is calculated by the UL steering vector ${{\boldsymbol{\alpha}}^u}\left( {\theta _{p}^u,\phi _{p}^u} \right)$ with a rotation matrix as
\begin{equation}\label{steer vector rotation}
{{\boldsymbol{\alpha}}^d}\left( {\theta _p^d,\phi _p^d} \right) = \left( {{{\bf{R}}_h}\left( {\theta _p^d,\phi _p^d} \right) \otimes {{\bf{R}}_v}\left( {\theta _p^d} \right)} \right) \cdot 
 {{\boldsymbol{\alpha}}^u}\left( {\theta _{p}^u,\phi _{p}^u} \right),
\end{equation}
where
\begin{equation}\label{horizontal steer vector rotation}
    {{\bf{R}}_h}\left( {\theta _p^d,\phi _p^d} \right) = {\rm{diag}}\left( {\begin{array}{*{20}{c}}
    1\\
    {{e^{j2\pi \frac{{{l_h}\left( {{f^d} - {f^u}} \right)}}{c}\cos \theta _p^d\cos \phi _p^d}}}\\
     \cdots \\
    {{e^{j2\pi \left( {{N_h} - 1} \right)\frac{{{l_h}\left( {{f^d} - {f^u}} \right)}}{c}\cos \theta _p^d\cos \phi _p^d}}}
    \end{array}} \right),
\end{equation}
\begin{equation}\label{vertical steer vector rotation}
    {{\bf{R}}_v}\left( {\theta _p^d} \right) = {\rm{diag}}\left( {\begin{array}{*{20}{c}}
    1\\
    {{e^{j2\pi \frac{{{l_v}\left( {{f^d} - {f^u}} \right)}}{c}\cos \theta _p^d}}}\\
     \cdots \\
    {{e^{j2\pi \left( {{N_v} - 1} \right)\frac{{{l_v}\left( {{f^d} - {f^u}} \right)}}{c}\cos \theta _p^d}}}
    \end{array}} \right),
\end{equation}
are the horizontal rotation matrix and vertical rotation matrix, respectively.

In FDD mode, UL and DL  symbols are transmitted successively in time domain. Fig. \ref{fig_flowchart} demonstrates the flowchart of our proposed framework. The BS utilizes the SRS to extract the channel parameters, and based on the parameters, computes the wideband precoder for DL pilot. According to \cite{3gpp211}, the sounding reference signal (SRS) can be set as cyclical mode with a flexible periodicity ${T_{{\rm{SRS}}}}$ in units of slots. Considering a common configuration where the subcarrier-spacing is 30 ${\rm{kHz}}$, the minimum SRS periodicity ${T_{{\rm{SRS}}}}$ can be as short as 0.5 $\rm{ms}$. We denote the CSI delay by ${T_d} = {N_d}{T_{\rm{SRS}}}$, where $N_d$ is the delay in unit of time slots. The UE computes the complex DL path coefficients upon receiving the DL pilot and feeds them back. The BS finally reconstructs the DL CSI based on the feedback and the extracted parameters. 
\begin{figure}[!t]
\centering
\includegraphics[width=3.2in]{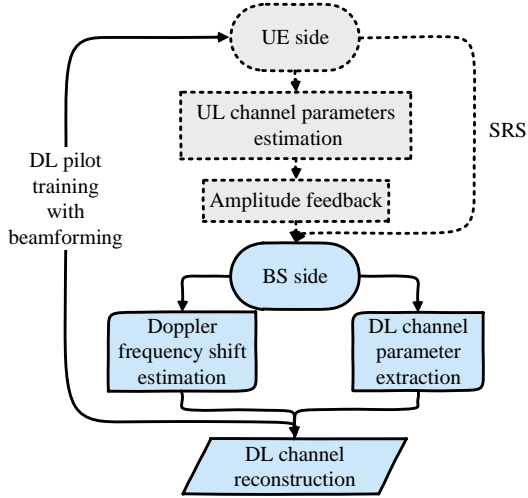}
\caption{Flowchart of our framework.}
\vspace{-0.3cm}
\label{fig_flowchart}
\end{figure}
\section{Uplink channel parameters extraction}\label{sec3}
Our proposed framework depends on the UL channel parameters extraction thanks to the merits of the partial channel reciprocity in FDD. The UL channel parameters is relatively easy to estimate through SRS at the BS side. In order to better exploit the sparsity of the multipath channel, we discuss the UL channel parameters extraction in angle-delay domain.
\subsection{Angle-delay domain projection}
In a wideband massive MIMO system, the UL channel in time domain can be written as
\begin{equation}\label{ul subcarrier channel model}
     {{\bf{h}}^u}\left( {t,f_n^u} \right) = \sum\limits_{p = 1}^P {\beta _p^u{{\bf{\boldsymbol{\alpha} }}^u}\left( {\theta _p^u,\phi _p^u} \right){e^{ - j2\pi f_n^u\tau _p^u}}{e^{jw_p^ut}}} ,
\end{equation}
where $f_n^u ,n \in \left\{ {1, \cdots {N_f}} \right\}$ is frequency of the $n$-th subcarrier. Then the UL channel in matrix form with all subcarriers is
\begin{equation}\label{ul subcarrier all}
    {{\bf{H}}^u}\left( t \right) = \left[ {{{\bf{h}}^u}{{\left( {t,f_1^u} \right)}}, \cdots {{\bf{h}}^u}{{\left( {t,f_{{N_f}}^u} \right)}}} \right].
\end{equation}
The vector form of \eqref{ul subcarrier all} is 
\begin{equation}\label{ul vector channel}
    {{\bf{h}}^u}\left( t \right) = {\bf{vec}}\left( {{{\bf{H}}^u}\left( t \right)} \right) = \sum\limits_{p = 1}^P {\beta _p^u{e^{ - j2\pi {f^u}{\tau^u_p}}}{e^{jw_p^ut}}{{\bf{r}}_p^u}}, 
\end{equation}
where ${{\bf{r}}_p^u} = {\bf{c}}\left( {\tau _p^u} \right) \otimes {{\boldsymbol{\alpha}}^u}\left( {{\theta^u_p},{\phi^u_p}} \right)$ is the angle-delay structure of path $p$ and ${\bf{c}}\left( {\tau _p^u} \right)$ is the delay vector
\begin{equation}\label{delay vector}
    {\bf{c}}\left( {\tau _p^u} \right) = {e^{ - j2\pi {f^u}{f_\Delta }{\tau _p^u}}}{\left[ {1 \cdots {e^{ - j2\pi {\tau _p^u}\left( {{N_f} - 1} \right){f_\Delta }}}} \right]^T}.
\end{equation}
Similarly, the vectorized DL channel is
\begin{equation}\label{dl vector channel}
    {{\bf{h}}^d}\left( t \right) = \sum\limits_{p = 1}^P {\beta _p^d{e^{ - j2\pi {f^d}{\tau^d _p}}}{e^{jw_p^dt}}{\bf{r}}_p^d}.
\end{equation}
A matrix ${\bf{Q}} \in{\mathbb{C}^{{N_t N_f} \times N_t N_f}}$ is used to project ${{\bf{h}}^u}\left( t \right)$ to the angle-delay domain \cite{dft2002,2013BeamSpaceSayeed}
\begin{equation}\label{dft matrix}
    {\bf{Q}}= {\bf{W}}{\left( {{N_f}} \right)^H} \otimes {\bf{W}}\left( {{N_h}} \right) \otimes {\bf{W}}\left( {{N_v}} \right).
\end{equation}
The DFT matrix ${\bf{W}}\left( X \right)$ is calculated by
\begin{equation}\label{dft matrix unit}
{\bf{W}}\left( X \right) = \frac{1}{{\sqrt X }}\left[ {\begin{array}{*{20}{c}}
1&1& \cdots &1\\
1&{{w^{1 \cdot 1}}}& \cdots &{{w^{1 \cdot \left( {X - 1} \right)}}}\\
 \vdots & \vdots & \ddots & \cdots \\
1&{{w^{\left( {X - 1} \right) \cdot 1}}}& \cdots &{{w^{\left( {X - 1} \right)\left( {X - 1} \right)}}}
\end{array}} \right],
\end{equation}
where $w = {e^{\frac{{j2\pi }}{X}}}$. Then the channel in angle-delay domain ${\hat {\bf{g}}^u}\left( t \right) \in{\mathbb{C}^{{N_t N_f} \times 1}}$ is 
\begin{equation}\label{ul angle-delay vector channel}
    {\hat {\bf{g}}^u}\left( t \right)= {{\bf{Q}}^H}{ {\bf{h}}^u}\left( t \right).
\end{equation}
By projecting ${{\bf{h}}^u}\left( t \right)$ to angle-delay domain, we can exploit the channel sparsity and obtain
\begin{equation}\label{ul all adi}
    {{\bf{h}}^u}\left( t \right)= \sum\limits_{i = 1}^{{N_t}{N_f}} {\hat g_i^u\left( t \right){{\bf{q}}_i}},
\end{equation}
where ${{\bf{q}}_i}$ is the $i$-th column of ${\bf{Q}}$ and $\hat g_i^u\left( t \right) = {{\bf{q}}_i}^H{{\bf{h}}^u}\left( t \right)$ is the corresponding complex amplitude. Thanks to the channel sparsity in angle-delay domain, ${{\bf{h}}^u}\left( t \right)$ can be approximated with the linear combination of a relatively small number of selected columns of $\bf{Q}$ which contain most power of the  channel ${{\bf{h}}^u}\left( t \right)$. The set of column indices of ${\bf{Q}}$ is found by
\begin{equation}\label{ul angle-delay path index set}
    {{\cal S}} = \mathop {\arg \min }\limits_{\left|{{\cal S}} \right|} \{ \sum\limits_{l = 1}^{{N_L}} {\sum\limits_{i \in {{\cal S}}} {{{\left| {\hat g_i^u\left( {{t_l}} \right)} \right|}^2}}  \ge \eta } \sum\limits_{l = 1}^{{N_L}} {{{\left| {{{\hat {\bf{g}}}^u}\left( {{t_l}} \right)} \right|}^2}} \},
\end{equation}
 where $\hat g_i^u\left( {{t_l}} \right)$ is the $i$-th row of ${\hat {\bf{g}}^u}\left( {{t_l}} \right)$ and $\eta$ denotes the power threshold. We use $N_L$ channel samples in each UL channel parameter extraction. The size of $\cal{S}$ is denoted by $N_s$ which is referred to as the total number of selected columns in $\bf{Q}$. 

In fact, the index set ${\cal{S}}$ is time-varying and is updated in each UL channel parameter extraction. However, we drop the argument $t$ for simplicity in the rest of the paper. Even though $N_s$ is variant to channel sample and time, we tend to find a fixed $N_s$ satisfying \eqref{ul angle-delay path index set}, which is more convenient to implement in practice. Note that $N_s$ should be carefully chosen not only because it affects the estimation accuracy but also the computation complexity. Thus, there lies a trade-off of $N_s$ between the performance and the complexity. The UL channel can be approximated with $N_s$ angle-delay vectors
\begin{equation}\label{ul angle-delay path selection}
    {\widetilde {\bf{h}}^u}\left( t \right)= \sum\limits_{i \in {\mathcal {S}}} {\hat g_i^u\left( t \right){{\bf{q}}_i}}.
\end{equation}
Comparing \eqref{ul angle-delay path selection} and \eqref{ul vector channel}, the complex amplitude $\hat g_i^u\left( t \right)$ has an implicit physical meaning. Each vector ${\bf{q}}_i$ maps the angle-delay structure ${\bf{r}}_p^u$ and the corresponding $\hat g_i^u\left( t \right)$ maps the complex gain and Doppler frequency $\beta _p^u{e^{ - j2\pi {f^u}{\tau _p}}}{e^{jw_p^ut}}$. An $M$-order superposition of exponentials is utilized to fit the complex amplitude $\hat g_i^u\left( t \right)$ as
\begin{equation}\label{ul angle-delay path amplitude decompose}
    \hat g_i^u\left( t \right)  \buildrel \Delta \over  = \sum\limits_{m = 1}^M {a_m^u} \left( i \right){\left( {z_m^u\left( i \right)} \right)^t},
\end{equation}
where $z_m^u\left( i \right)$ denotes the Doppler frequency and ${a_m^u}\left( i \right)$ denotes the corresponding complex amplitude. Then the UL channel can be approximated by the following form
\begin{equation}\label{ul dft approximate}
    {\widetilde {\bf{h}}^u}\left( t \right) = \sum\limits_{i \in {{\cal S}}} {\sum\limits_{m = 1}^M {a_m^u\left( i \right){\left( {z_m^u\left( i \right)} \right)^t}{{\bf{q}}_i}} } .
\end{equation}
For simplicity, we assume the same $M$ for all selected angle-delay vectors. The value of $M$ should be carefully chosen considering the complexity, Doppler frequency variety and the mismatch problem of DFT projection. More details will be discussed in \secref{sec5}. In the following subsection, we aim to estimate the Doppler frequency $z_m^u\left( i \right)$ with MP method. 
\subsection{Matrix Pencil based Doppler Estimation}
The problem of estimating $z_m^u\left( i \right)$ from channel samples in \eqref{ul angle-delay path amplitude decompose} has a form of a superposition of complex exponentials, where MP method is particularly applicable. {MP method has the advantage of low computation complexity and noise-insensitive \cite{1990MP} over traditional polynomial methods like ESPRIT or Prony.} Therefore, we apply this method in angle-delay domain in order to extract the Doppler. 
We should first introduce Assumption \ref{assumption csi delay time}, which means the stationary time is larger than the CSI delay. 
\begin{assumption}\label{assumption csi delay time}
     During the period of CSI delay $T_d$, channel parameters such as angle and Doppler frequency shift are nearly unchanged.
\end{assumption}
This assumption often holds under a moderate mobility scenario \cite{2019mobilityReportExperi}. Assume the CSI delay is 5 $\rm{ms}$ and the UE speed is 100 $\rm{km/h}$ for example, then the UE moves about 0.14 $\rm{m}$ during this CSI delay period. The position of the UE is approximately unchanged considering that the distance between the UE and the BS is much larger. Therefore, channel parameters such as angles and Doppler barely change during CSI delay period. 

We first briefly introduce the principle of MP method. The Doppler $z_m^u\left( i \right)$ is referred to as the pole in MP. In this method, three parameters are crucial, i.e., sample quantity $N_L$, prediction order $L$ and poles $z_m^u\left( i \right)$. Then, the prediction matrices ${{\bf{P}}_1}\left( i \right),{{\bf{P}}_0}\left( i \right)$ are generated by the complex gain $\hat g_i^u\left( t \right)$ as
\begin{equation}\label{mp prediction second matrix}
{{\bf{P}}_1}\left( i \right)= \left[ {\begin{array}{*{20}{c}}
{\hat g_i^u\left( {{t_{L + 1}}} \right)}&{\hat g_i^u\left( {{t_L}} \right)}& \cdots &{\hat g_i^u\left( {{t_2}} \right)}\\
{\hat g_i^u\left( {{t_{L + 2}}} \right)}&{\hat g_i^u\left( {{t_{L + 1}}} \right)}& \cdots &{\hat g_i^u\left( {{t_3}} \right)}\\
 \vdots & \vdots & \ddots & \vdots \\
{\hat g_i^u\left( {{t_{{N_L}}}} \right)}&{\hat g_i^u\left( {{t_{{N_L} - 1}}} \right)}& \cdots &{\hat g_i^u\left( {{t_{{N_L} - L + 1}}} \right)}
\end{array}} \right],
\nonumber
\end{equation}
\begin{equation}\label{mp prediction first matrix}
    {{\bf{P}}_0}\left( i \right)= \left[ {\begin{array}{*{20}{c}}
{\hat g_i^u\left( {{t_L}} \right)}&{\hat g_i^u\left( {{t_{L{\rm{ - }}1}}} \right)}& \cdots &{\hat g_i^u\left( {{t_1}} \right)}\\
{\hat g_i^u\left( {{t_{L + 1}}} \right)}&{\hat g_i^u\left( {{t_L}} \right)}& \cdots &{\hat g_i^u\left( {{t_2}} \right)}\\
 \vdots & \vdots & \ddots & \vdots \\
{\hat g_i^u\left( {{t_{{N_L}{\rm{ - }}1}}} \right)}&{\hat g_i^u\left( {{t_{{N_L} - 2}}} \right)}& \cdots &{\hat g_i^u\left( {{t_{{N_L} - L}}} \right)}
\end{array}} \right].
\nonumber
\end{equation} 
Drop superscript $u$ for simplicity and construct three matrices
\begin{equation}\label{mp prediction eigenvalue}
    {{\bf{Z}}_0} = {\rm{diag}}\left\{ {{z_1}\left( i \right),{z_2}\left( i \right), \cdots {z_M}\left( i \right)} \right\},
\end{equation}
\begin{small}
\begin{equation}\label{mp prediction first eigenvalue matrix}
{{\mathbf{Z}}_1} = \left[ {\begin{array}{*{20}{c}}
  1&1& \cdots &1 \\ 
  {{z_1}\left( i \right)}&{{z_2}\left( i \right)}& \cdots &{{z_M}\left( i \right)} \\ 
   \vdots & \vdots & \ddots & \vdots  \\ 
  {{z_1}{{\left( i \right)}^{{N_L} - L - 1}}}&{{z_2}{{\left( i \right)}^{{N_L} - L - 1}}}& \cdots &{{z_M}{{\left( i \right)}^{{N_L} - L - 1}}} 
\end{array}} \right],
\nonumber
\end{equation}
\end{small}
\begin{equation}\label{mp prediction second eigenvalue matrix}
{{\mathbf{Z}}_2} = \left[ {\begin{array}{*{20}{c}}
  {{z_1}{{\left( i \right)}^{L - 1}}}&{{z_1}{{\left( i \right)}^{L - 1}}}& \cdots &1 \\ 
  {{z_1}{{\left( i \right)}^{L - 1}}}&{{z_2}{{\left( i \right)}^{L - 2}}}& \cdots &1 \\ 
   \vdots & \vdots & \ddots & \vdots  \\ 
  {{z_M}{{\left( i \right)}^{L - 1}}}&{{z_M}{{\left( i \right)}^{L - 2}}}& \cdots &1 
\end{array}} \right].
\end{equation}
The complex amplitude $a_m^u\left( i \right)$ is given in the form of a diagonal matrix
\begin{equation}\label{mp prediction amplitude matrix}
    {\bf{A}}_u\left( i \right) = {\rm{diag}}\left\{ {{a_1^u}\left( i \right),{a_2^u}\left( i \right), \cdots {a_M^u}\left( i \right)} \right\}.
\end{equation}
The following relationship holds according to \cite{1990MP}
\begin{equation}\label{mp prediction solve first}
    \left\{ \begin{array}{l}
{{\bf{P}}_0}\left( i \right)= {{\bf{Z}}_1}\left( i \right){\bf{A}}_u\left( i \right){{\bf{Z}}_2}\left( i \right),\\
{{\bf{P}}_1}\left( i \right)= {{\bf{Z}}_1}\left( i \right){\bf{A}}_u\left( i \right){{\bf{Z}}_0}\left( i \right){{\bf{Z}}_2}\left( i \right).
\end{array} \right.
\end{equation}

In order to describe the mechanism of how to obtain poles, Lemma \ref{lemma MPP proof} \cite{1990MP} is introduced 
\begin{lemma}\label{lemma MPP proof}
If $M \le L \le N_L - M$, the solution to the singular generalized eigenvalue problem  
\begin{equation}\label{mp eigenvalue}
    \left( {{{\bf{P}}_0}{{\left( i \right)}^\dag }{{\bf{P}}_1}\left( i \right)} \right){\bf{x}} = z{\bf{x}},
\end{equation}
points the way to find poles $z_m^u\left( i \right)$. Each eigenvalue ${z}$ equals to the pole ${z_m^u\left( i \right)}$. $\bf{x}$ is the corresponding eigenvector.
\end{lemma}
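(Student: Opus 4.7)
The plan is to prove the lemma by directly plugging in the factorizations of ${\bf P}_0(i)$ and ${\bf P}_1(i)$ given in \eqref{mp prediction solve first}, reducing the generalized eigenvalue problem to a similarity-like relation involving the diagonal pole matrix ${\bf Z}_0$, and then reading off its eigenvalues.

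\textbf{Step 1: Rank conditions on the factors.} First I would observe that ${\bf Z}_1(i)$ is a $(N_L - L) \times M$ Vandermonde-type matrix built from the $M$ (assumed distinct) poles $z_1(i), \ldots, z_M(i)$, and ${\bf Z}_2(i)$ is an $M \times L$ Vandermonde-type matrix built from the same poles. The hypothesis $M \le L \le N_L - M$ guarantees $N_L - L \ge M$ and $L \ge M$, so by the standard Vandermonde argument ${\bf Z}_1(i)$ has full column rank $M$ and ${\bf Z}_2(i)$ has full row rank $M$. The amplitude matrix ${\bf A}_u(i)$ is diagonal with nonzero entries (else the corresponding exponential component would be absent), hence invertible.

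\textbf{Step 2: Clean form of the pseudoinverse.} Using the fact that $(XY)^\dag = Y^\dag X^\dag$ holds whenever $X$ has full column rank and $Y$ has full row rank, and applying this twice to ${\bf P}_0(i) = {\bf Z}_1(i){\bf A}_u(i){\bf Z}_2(i)$, I obtain
\begin{equation}
{\bf P}_0(i)^\dag = {\bf Z}_2(i)^\dag {\bf A}_u(i)^{-1} {\bf Z}_1(i)^\dag.
\nonumber
\end{equation}
Since ${\bf Z}_1(i)$ has full column rank, ${\bf Z}_1(i)^\dag {\bf Z}_1(i) = {\bf I}_M$. Substituting ${\bf P}_1(i) = {\bf Z}_1(i){\bf A}_u(i){\bf Z}_0(i){\bf Z}_2(i)$ then gives
\begin{equation}
{\bf P}_0(i)^\dag {\bf P}_1(i) = {\bf Z}_2(i)^\dag {\bf Z}_0(i) {\bf Z}_2(i).
\nonumber
\end{equation}

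\textbf{Step 3: Extracting the poles.} The matrix on the right is $L \times L$ but of rank at most $M$. Suppose $({\bf P}_0(i)^\dag {\bf P}_1(i)){\bf x} = z {\bf x}$ for some $z \ne 0$. Left-multiplying by ${\bf Z}_2(i)$ and using ${\bf Z}_2(i){\bf Z}_2(i)^\dag = {\bf I}_M$ (from full row rank) yields
\begin{equation}
{\bf Z}_0(i)\bigl({\bf Z}_2(i){\bf x}\bigr) = z \bigl({\bf Z}_2(i){\bf x}\bigr),
\nonumber
\end{equation}
so $z$ is an eigenvalue of the diagonal matrix ${\bf Z}_0(i)$, i.e., $z \in \{z_1(i), \ldots, z_M(i)\}$, provided ${\bf Z}_2(i){\bf x} \ne {\bf 0}$ (which I would verify by noting $z \ne 0$ forces the image to be nonzero). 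For the converse, given any pole $z_m(i)$ and the corresponding unit vector ${\bf e}_m \in \mathbb{C}^M$, I would choose ${\bf x} = {\bf Z}_2(i)^\dag {\bf e}_m$ and verify directly that $({\bf P}_0(i)^\dag {\bf P}_1(i)){\bf x} = z_m(i){\bf x}$, thereby recovering all $M$ poles as nonzero eigenvalues, with the remaining $L - M$ eigenvalues being zero.

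\textbf{Anticipated main obstacle.} The routine manipulations are essentially algebraic; the only subtle point is justifying the pseudoinverse identity ${\bf P}_0(i)^\dag = {\bf Z}_2(i)^\dag {\bf A}_u(i)^{-1} {\bf Z}_1(i)^\dag$, which requires appealing to the rank conditions established in Step 1 rather than to the general (and invalid) reverse-order law. Once that is in place, the similarity-type reduction in Step 3 is what pins the eigenvalues to the poles and completes the argument.
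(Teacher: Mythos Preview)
Your argument is correct and is essentially the classical Matrix Pencil derivation: use the rank guarantees from $M\le L\le N_L-M$ to split the pseudoinverse of ${\bf P}_0(i)$, collapse ${\bf Z}_1(i)^\dag{\bf Z}_1(i)$ to the identity, and reduce the problem to the eigenvalues of ${\bf Z}_2(i)^\dag{\bf Z}_0(i){\bf Z}_2(i)$, which are read off from the diagonal ${\bf Z}_0(i)$ via the full-row-rank identity ${\bf Z}_2(i){\bf Z}_2(i)^\dag={\bf I}_M$. The one subtlety you flag---that the reverse-order law for the Moore--Penrose inverse requires the rank conditions of Step~1 and is not valid in general---is exactly the right point to highlight, and you handle it properly.

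As for comparison with the paper: the paper does \emph{not} prove this lemma. It is quoted as a known result from the original Matrix Pencil reference and used as a black box; the factorizations in \eqref{mp prediction solve first} are stated precisely so that the lemma can be invoked. Your write-up therefore supplies what the paper omits, and it matches the standard proof one finds in the Matrix Pencil literature.
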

After obtaining all poles, the Doppler frequency can be easily calculated.
Algorithm \ref{alg1} explains how to estimate the UL channel parameters like Doppler frequency shifts and the set of angle-delay indices.
\begin{algorithm} [H]                    % enter the algorithm environment
\caption{Matrix Pencil based Doppler Estimation}          % give the algorithm a caption
\label{alg1}                           % and a label for \ref{} commands later in the document
\begin{algorithmic} [1]
\normalsize           % enter the algorithmic environment
\STATE Initialize $N_L,L,M$, start time $t_s$, end time $t_e$ and obtain channel sample ${\bf{h}}^u\left( t \right)$
\STATE Project ${\bf{h}}^u\left( t \right)$ to angle-delay domain as \eqref{ul all adi}
\STATE Find a suitable $N_s$ satisfying \eqref{ul angle-delay path index set}
\FOR{$t \in \left[ {t_s,t_e} \right]$}
\STATE Obtain the index set $\cal{S}$ 
    \FOR{$n_i\in \left[ {1,N_s} \right]$}
    \STATE Generate prediction matrix ${{\bf{P}}_1}\left( n_i \right)$, ${{\bf{P}}_0}\left( n_i \right)$
    \STATE Using \eqref{mp eigenvalue} to calculate the eigenvalue matrix ${{\bf{Z}}_0}\left( n_i \right)$ 
    \STATE Update $n_i = n_i+1$
    \ENDFOR
 \STATE Update $t = t+1$
\ENDFOR
\STATE Return the UL angle-delay index set and Doppler frequency shift.
\end{algorithmic}
\end{algorithm}
\subsection{Noisy channel sample analysis}
The previous discussion is based on noise-free channel sample assumption. In such cases we can let the prediction order $L=M$ for simplicity. In realistic scenarios, only noisy channel samples are available. In this case, we propose to apply a minimum description length (MDL) criterion \cite{1985mdl} to detect the value $M$ in \eqref{ul dft approximate} and cancel the noise by an $M$ order subtraction Singular Value Decomposition (SVD) where the prediction order satisfies $L>M$. The value of $M$ is minimized under an MDL criterion without prior decision or hypothesis as

\begin{small}
\begin{equation}\label{mdl M value}
    M = \mathop {\min }\limits_{x \in \left\{ {0,1 \cdots L - 1} \right\}} \left\{ {\log {{\left( {\frac{{\prod\limits_{m = x + 1}^L {{z_m}{{\left( i \right)}^{{1 \mathord{\left/
 {\vphantom {1 {\left( {L - x} \right)}}} \right.
 \kern-\nulldelimiterspace} {\left( {L - x} \right)}}}}} }}{{\frac{1}{{L - x}}\sum\limits_{m = x + 1}^L {{z_m}\left( i \right)} }}} \right)}^{ - {N_L}\left( {L - x} \right)}}} \right\},
 \nonumber
\end{equation}
\end{small}
where $z_m\left(i\right)$ is the singular value of
\begin{equation}\label{mdl new prediction matrix}
 {{\bf{P}}_{1,0}} = \left[ {\begin{array}{*{20}{c}}
{{\bf{p}}\left( {{t_{L + 1}}} \right)}&\vline& {{{\bf{P}}_0}}
\end{array}} \right],
\end{equation}
with ${{\bf{p}}\left( {{t_{L + 1}}} \right)}$ being the first column of ${\bf{P}}_1$. After obtaining the value of $M$, the prediction matrix ${{\bf{P}}_{1,0}}$ is calculated after a rank-$M$ truncated SVD
\begin{equation}\label{mdl prediction svd}
    {{\bf{P}}_{1,0}} = {{\bf{U}}_M}{{\bf{\Lambda }}_M}{\bf{V}}_M^H,
\end{equation}
where ${\bf{U}}_M,{{\bf{\Lambda }}_M},{\bf{V}}_M$ are $M$-truncated left singular vector, singular value and right singular vector of ${{\bf{P}}_{1,0}}$, respectively. Then \eqref{mp eigenvalue} in Lemma \ref{lemma MPP proof} becomes
\begin{equation}\label{mdl matrix pencil eigenvalue}
    \left( {{{\bf{P}}_{0,M}}{{\left( i \right)}^\dag }{{\bf{P}}_{1,M}}\left( i \right)} \right){\bf{x}} = z{\bf{x}},
\end{equation}
where 
\begin{equation}\label{mdl truncated prediction matrix}
    \left\{ \begin{array}{l}
{{\bf{P}}_{0,M}}\left( i \right) = {{\bf{U}}_M}{{\bf{\Lambda }}_M}{{\bf{V}}_{M\left( {1:M - 1,:} \right)}}^H,\\
{{\bf{P}}_{1,M}}\left( i \right) = {{\bf{U}}_M}{{\bf{\Lambda }}_M}{{\bf{V}}_{M\left( {2:M,:} \right)}}^H,
\end{array} \right.
\end{equation}
and ${{\bf{V}}_{M\left( {1:M - 1,:} \right)}}$ denotes the sub-matrix consists of the first row to the $\left(M-1\right)$-th row and ${{\bf{V}}_{M\left( {2:M,:} \right)}}$ consists of the second row to the $M$-th row likewise. Finding the eigenvalue of \eqref{mdl matrix pencil eigenvalue} equals to obtaining the poles $z_m\left(i\right)$ in noisy channel sample case.

In this section, the UL channel parameters, such as Doppler frequency shift and angle-delay vector, are obtained at the BS and the UEs. 
These parameters will be used in the following section to facilitate the DL training. 

\section{Downlink Training and Channel Prediction}\label{sec4}
Our CSI acquisition framework relies on channel parameters estimated from the UL channel samples. In Section \ref{sec3}, we have obtained the UL angle-delay vectors ${{\bf{q}}_i}$ and the Doppler frequency $z_m^u\left( i \right)$.  In this section, we introduce the JADD pilot precoding scheme based on the extracted channel parameters and the DL channel reconstruction procedure. 

\subsection{Extract parameters from uplink channel parameters}
Since the UL and DL are operating in different frequency bands, the angle-delay vectors and the Doppler shifts obtained from the UL channel samples have to adapt to the DL frequency band. 
Define the selected UL angle-delay vectors as
 \begin{equation}\label{ul angle-delay vector define}
    {{\mathbf{u}}_j} = \left\{ {{{\mathbf{q}}_i}|i = {i_{{s_j}}},j \in \left\{ {1,2 \cdots {N_s}} \right\}} \right\}, 
 \end{equation}
where the index ${{i_{{s_j}}}}$ denotes the $j$-th index in the UL angle-delay vector index set ${\cal{S}}$.
In order to transform the UL angle-delay vector to the DL one, we introduce Proposition \ref{Propo dl anguar-delay}.
\begin{proposition}\label{Propo dl anguar-delay}
The DL angle-delay vector ${\bf{d}}_j$ is obtained from the UL angle-delay vector ${\bf{u}}_j$ by
\begin{equation}\label{dl angle-delay path Proposition}
{{\bf{d}}_j} = \left( {{{\bf{I}}_{{N_f}}} \otimes {\bf{R}}\left( {\theta _j^d,\phi _{{j}}^d} \right)} \right){{\bf{u}}_j},j \in \left\{ {1,2 \cdots {N_s}} \right\},
\end{equation}
where ${\bf{R}}\left( {\theta _j^d,\phi _{{j}}^d} \right) = {{\bf{R}}_h}\left( {\theta _j^d,\phi _j^d} \right) \otimes {{\bf{R}}_v}\left( {\theta _j^d} \right)$.
\end{proposition}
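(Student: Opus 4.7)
The plan is to reduce the proposition to a straightforward application of the mixed-product property of the Kronecker product, combined with the partial reciprocity (6) and the steering-vector rotation (7). First, I would exploit the Kronecker decomposition $\mathbf{Q} = \mathbf{W}(N_f)^H \otimes \mathbf{W}(N_h) \otimes \mathbf{W}(N_v)$ from (15). Since a column of a Kronecker product of matrices factorizes as the Kronecker product of one column from each factor, each selected column can be written as
\begin{equation*}
\mathbf{u}_j = \mathbf{q}_{i_{s_j}} = \mathbf{w}_{f,j} \otimes \mathbf{w}_{h,j} \otimes \mathbf{w}_{v,j},
\end{equation*}
where $\mathbf{w}_{f,j}$, $\mathbf{w}_{h,j}$, $\mathbf{w}_{v,j}$ are specific columns of $\mathbf{W}(N_f)^H$, $\mathbf{W}(N_h)$, $\mathbf{W}(N_v)$, respectively. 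By the DFT definition in (16), each such column is a geometric-progression vector and can therefore be identified with a UL delay vector $\mathbf{c}(\tau_j)$ (compare with (12)) and with UL horizontal/vertical steering vectors $\boldsymbol{\alpha}_h^u(\theta_j, \phi_j)$, $\boldsymbol{\alpha}_v^u(\theta_j)$ (compare with (3)--(4)) at the grid point $(\tau_j, \theta_j, \phi_j)$ indexed by $i_{s_j}$.

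Next, I would invoke partial reciprocity (6): the delays and angles are common to UL and DL, while the subcarrier spacing $f_\Delta$ is also common. The delay factor $\mathbf{w}_{f,j}$ therefore carries over unchanged to the DL angle-delay vector $\mathbf{d}_j$; only the spatial factor $\mathbf{w}_{h,j} \otimes \mathbf{w}_{v,j}$ must be updated to the DL steering vector at the same physical angles. By (7)--(9), that DL spatial factor equals $\mathbf{R}(\theta_j^d, \phi_j^d)(\mathbf{w}_{h,j} \otimes \mathbf{w}_{v,j})$ with $\mathbf{R}(\theta_j^d, \phi_j^d) = \mathbf{R}_h(\theta_j^d, \phi_j^d) \otimes \mathbf{R}_v(\theta_j^d)$, so that
\begin{equation*}
\mathbf{d}_j = \mathbf{w}_{f,j} \otimes \bigl(\mathbf{R}(\theta_j^d, \phi_j^d)(\mathbf{w}_{h,j} \otimes \mathbf{w}_{v,j})\bigr).
\end{equation*}
Using $\mathbf{w}_{f,j} = \mathbf{I}_{N_f} \mathbf{w}_{f,j}$ and the mixed-product rule $(A \otimes B)(C \otimes D) = (AC) \otimes (BD)$, this becomes
\begin{equation*}
\mathbf{d}_j = \bigl(\mathbf{I}_{N_f} \otimes \mathbf{R}(\theta_j^d, \phi_j^d)\bigr)\bigl(\mathbf{w}_{f,j} \otimes \mathbf{w}_{h,j} \otimes \mathbf{w}_{v,j}\bigr) = \bigl(\mathbf{I}_{N_f} \otimes \mathbf{R}(\theta_j^d, \phi_j^d)\bigr)\mathbf{u}_j,
\end{equation*}
which is precisely (20).

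The main obstacle I anticipate is conceptual rather than algebraic: the selected column $\mathbf{u}_j$ is an element of the DFT basis rather than an exact physical steering/delay vector, and in general the grid need not align with the true path parameters. To make the identification rigorous, one should interpret $(\theta_j^d, \phi_j^d)$ as the virtual angles of the DFT grid point picked out by $i_{s_j}$, so that $\mathbf{R}(\theta_j^d, \phi_j^d)$ is evaluated at those virtual angles; alternatively, since the rotation (7) is linear in the UL steering vector, the identity extends by superposition from individual physical paths to any linear combination, including a DFT basis vector. Once this alignment is fixed, the remainder of the argument is purely the Kronecker bookkeeping described above.
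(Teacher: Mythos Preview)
Your proposal is correct and follows essentially the same route as the paper: both exploit the Kronecker structure of $\mathbf{Q}$ to separate the delay factor (left unchanged, hence the $\mathbf{I}_{N_f}$) from the spatial factor (rotated by $\mathbf{R}_h\otimes\mathbf{R}_v$), and then recombine via the mixed-product rule. The only difference is presentational: the paper spells out the index bookkeeping---extracting the sub-indices $(i^h,i^v)$ from $i$ by modular arithmetic and solving $\sin\theta_j\sin\phi_j = i^h c/(l_h f^u N_h)$, $\cos\theta_j = i^v c/(l_v f^u N_v)$ for the virtual grid angles at which $\mathbf{R}$ is evaluated---whereas you correctly flag this as the ``virtual angle'' identification and leave it implicit.
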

\begin{proof}
 \quad \emph{Proof:} Please refer to Appendix \ref{Appendix Propo dl angle-delay}. 
\end{proof}
Proposition \ref{Propo dl anguar-delay} demonstrates how to acquire the DL angle-delay vector from the UL ones when the UE is equipped with single antenna. In practice, the UEs may have dual-polarized antennas. The generalization of our method is straightforward, as shown in Remark \ref{remark two polarization}.

\begin{remark}\label{remark two polarization}
    If the UEs are equipped with dual-polarized antennas, the DFT matrix ${\bf{W}}\left( {{N_t}} \right)$ becomes 
    \begin{equation}\label{bipolaration rotatation matrix}
        {\bf{W}}\left( {{N_t}} \right)= \left[ {\begin{array}{*{20}{c}}
    {{\bf{W}}\left( {{N_h}} \right) \otimes {\bf{W}}\left( {{N_v}} \right)}&{}\\
    {}&{{\bf{W}}\left( {{N_h}} \right) \otimes {\bf{W}}\left( {{N_v}} \right)}
    \end{array}} \right].
    \nonumber
    \end{equation}
    Thus, the $j$-th DL angle-delay vector is now
    \begin{equation}\label{dl angluar-delay two polarization}
    {{\bf{d}}_j} = \left( {{{\bf{I}}_{{N_f}}} \otimes \left[ {\begin{array}{*{20}{c}}
    {{\bf{R}}\left( {\theta _j^d,\phi _j^d} \right)}&{}\\
    {}&{{\bf{R}}\left( {\theta _j^d,\phi _j^d} \right)}
    \end{array}} \right]} \right){{\bf{q}}_i}.
    \end{equation}
\end{remark}

Then we calculate the DL Doppler frequency shift with the poles $z_m^u\left( i \right)$ obtained from the UL channel samples
\begin{equation}\label{dl doppler frequency}
   {e^{jw_m^d\left( j \right)}} = {e^{j\frac{{\arccos \left( {\Re \left\{ {\frac{{{z_m^u}\left( i \right)}}{{\left| {{z_m^u}\left( i \right)} \right|}}} \right\}} \right){f^d}}}{{{f^u}}}}}.
\end{equation}
\subsection{DL pilot precoding and CSI reconstruction}
With  the angle-delay vectors and Doppler frequency shifts of the DL channel, we may reconstruct the DL CSI as 
\begin{equation}\label{dl reconstruct channel}
    {\widetilde {\mathbf{h}}^d}\left( t \right) = \sum\limits_{j = 1}^{{N_s}} {\sum\limits_{m{\text{  =  }}1}^M {a_m^d\left( j \right){e^{jw_m^d\left( j \right) {t} }}{{\mathbf{d}}_j}} } ,
\end{equation}
where ${a_m^d}\left( j \right)$ is the $m$-th complex amplitude corresponding to ${{\bf{d}}_j}$. In order to reconstruct the DL channel, ${a_m^d}\left( j \right)$ has to be estimated. We propose to do so with JADD precoded pilot signals. The proposed precoding matrix also helps to reduce the training overhead by exploiting the sparse structure of $\widetilde {\bf{h}}^d\left( t \right)$. 

The vectorized DL channel $\widetilde {\bf{h}}^d\left( t \right)$ can be decomposed to three matrices as 
\begin{equation}\label{dl vector decompose}
    \widetilde {\bf{h}}^d\left( t \right) = {{\bf{D}}}{{\bf{E}}}\left( t \right){{\bf{a}}^d}.
\end{equation}
The DL angle-delay vector matrix ${\bf{D}} \in {\mathbb{C}^{{N_f}{N_t} \times {N_s}}}$ is 
\begin{equation}\label{dl angle-delay path matrix}
{{\bf{D}}} = \left[ {\begin{array}{*{20}{c}}
{{\bf{d}}_1}&{{\bf{d}}_2}& \cdots &{{\bf{d}}_{N_s}}
\end{array}} \right].
\end{equation}
The Doppler matrix ${\bf{E}}\left( t \right) \in {\mathbb{C}^{{N_s} \times {N_s}M}}$ is defined as
\begin{equation}\label{dl doppler matrix}
     {\bf{E}}\left( t \right) = \left[ {\begin{array}{*{20}{c}}
{{{\bf{e}}_1}\left( t \right)}&{}&{}&{}\\
{}&{{{\bf{e}}_2}\left( t \right)}&{}&{}\\
{}&{}& \ddots &{}\\
{}&{}&{}&{{{\bf{e}}_{N_s}}\left( t \right)}
\end{array}} \right],
\end{equation}
where
\begin{equation}\label{dl doppler vector}
{{\mathbf{e}}_j}\left( t \right) = \left[ {\begin{array}{*{20}{c}}
  {{e^{jw_1^d\left( j \right) {t} }}}&{{e^{jw_2^d\left( j \right) {t} }}}& \cdots &{{e^{jw_M^d\left( j \right) {t} }}} 
\end{array}} \right].
\end{equation}
The DL complex amplitude vector ${{\bf{a}}^d} \in {\mathbb{C}^{{{N_s}M}\times 1 }}$ is
\begin{equation}\label{dl complex amplitude matrix}
    {{\bf{a}}^d} = {\left[ {\begin{array}{*{20}{c}}
{{{\bf{a}}^d}\left( 1 \right)}&{{{\bf{a}}^d}\left( 2 \right)}& \cdots &{{{\bf{a}}^d}\left( {{N_s}} \right)}
\end{array}} \right]^T},
\end{equation}
where 
\begin{equation}\label{d complex amplitude vector}
    {{\bf{a}}^d}\left( j \right) = {\left[ {\begin{array}{*{20}{c}}
{a_1^d\left( j \right)}&{a_2^d\left( j \right)}& \cdots &{a_M^d\left( j \right)}
\end{array}} \right]}.
\end{equation}
Using \eqref{dl vector decompose}, we can design a precoding matrix to facilitate DL pilot training. 
This matrix is constructed based on the DL Doppler frequency shifts and angle-delay vectors. 
Traditionally, the idea of precoding is in spatial domain, where the signal is combined in the air from a receiver point of view. However in our scheme, the joint spatial-frequency precoding is a generalized wideband concept. Essentially, the training signal is combined in spatial domain at the BS side, and then combined in frequency domain at the UE side \cite{yin2021codebook}. 
Denote the precoding matrix by ${{\bf{F}}}\left( t \right) \in \mathbb{C}^{N_t N_f \times N_s M}$:
\begin{equation}\label{DL prebeam matrix column}
    {\bf{F}}\left( t \right) = \left[ {\begin{array}{*{20}{c}}
{{{\bf{f}}_1}\left( t \right)}&{{{\bf{f}}_2}\left( t \right)}& \cdots &{{{\bf{f}}_{{N_s}M}}\left( t \right)}
\end{array}} \right].
\end{equation}
Each column of ${\bf{F}}\left( t \right)$, e.g., ${{\mathbf{f}}_n}(t) \in {{\mathbb{C}}^{{N_f}{N_t} \times 1}}$, is composed of the precoding vectors applied on all $N_f$ subcarriers:
\begin{equation}\label{DL prebeaform F per subcarrier}
    {{\mathbf{f}}_n}\left( t \right) = {\left[ {\begin{array}{*{20}{c}}
  {{{\mathbf{f}}_n}{{\left( {t,{f_1}} \right)}^T}}&{{{\mathbf{f}}_n}{{\left( {t,{f_2}} \right)}^T}}& \cdots &{{{\mathbf{f}}_n}{{\left( {t,{f_{{N_f}}}} \right)}^T}} 
\end{array}} \right]^T},
\nonumber
\end{equation}
where 
${{\mathbf{f}}_n}\left( {t,{f_l}} \right) \in {\mathbb{C}^{{N_t} \times 1}}$ is the precoder for the $l$-th subcarrier in the wideband precoder ${{\mathbf{f}}_n}\left( t \right)$. 

Denote the pilot matrix $\bf{S}$ by
\begin{equation}\label{DL pream matrix pilot}
    {\bf{S}}{\rm{ = }}{\left[ {\begin{array}{*{20}{c}}
{{{\bf{s}}_1}^T}&{{{\bf{s}}_2}^T}& \cdots &{{{\bf{s}}_{{N_s}M}}^T}
\end{array}} \right]^T},{{\bf{s}}_n} \in {\mathbb{C}^{1 \times \tau }}.
\end{equation}
where $\tau$ is the length of pilot sequence. Then the transmitted pilot sequence at the  $l$-th subcarrier by the BS is
\begin{equation}\label{dl prebeam matrix cal in space}
 {{\mathbf{g}}^d}\left( {t,{f_l}} \right) = \sum\limits_{n = 1}^{{N_s}M} {{{\mathbf{f}}_n}\left( {t,{f_l}} \right){{\mathbf{s}}_n}}.   
\end{equation}
The received pilot signal by the UE at the  $l$-th subcarrier is 
\begin{equation}\label{dl transmitted signal per subcarrier}
   {{\bf{x}}^d}\left( {t,{f_l}} \right) = {{\bf{h}}^d}{\left( {t,{f_l}} \right)^T}{{\bf{g}}^d}\left( {t,{f_l}} \right) + {\bf{n}}\left( {t,{f_l}} \right),
\end{equation}
where ${{\mathbf{h}}^d}\left( {t,{f_l}} \right)$ denotes the DL channel at the $l$-th subcarrier and ${\bf{n}}\left( {t,{f_l}} \right)$ is the noise at the $l$-th subcarrier. The UE makes a summation over all subcarriers as
\begin{equation}\label{dl prebeam matrix cal in frequency}
    {{\mathbf{y}}^d}\left( t \right) = \sum\limits_{l = 1}^{{N_f}} {{\mathbf{x}^d}\left( {t,{f_l}} \right)}  + {\mathbf{n}}\left( t \right).
\end{equation}
The above-mentioned joint spatial-frequency precoding of the training signal can also be written in matrix form as
\begin{equation}\label{dl prebeam receive}
    {{\bf{y}}^d}\left( t \right) = {{\bf{h}}^d}{\left( t \right)^T}{\bf{F}}\left( t \right){\bf{S}} + {\bf{n}}\left( t \right).
\end{equation}

In the following, we devise our precoding matrix ${\bf{F}}\left( t \right)$. In our framework, the DL channel is reconstructed as \eqref{dl vector decompose}. Thus \eqref{dl prebeam receive} is written as 
\begin{equation}\label{dl prebeamform receive model}
    {\widetilde {\bf{y}}^d}\left( t \right) = \left( {{{\left( {{{\bf{a}}^d}} \right)}^T}{\bf{E}}{{\left( t \right)}^T}{{\bf{D}}^T}{\bf{F}}\left( t \right)} \right){\bf{S}} + {\bf{n}}\left( t \right).
\end{equation}

The Gaussian noise vector ${{\bf{n}}}\left( t \right) \in {\mathbb{C}^{1 \times \tau }}$ has a distribution of ${{\bf{n}}\left(t\right)} \sim {\mathcal {CN}}\left( {0,{\sigma }^2{\bf{I}}} \right)$, where  $\sigma ^2$ is the noise power. Our purpose is to estimate the coefficient vector ${\bf{a}}^d$ and feed it back to the BS.  
Notice that ${\bf{E}}{\left( t \right)^T}{{\bf{D}}^T}$ has a rank of $N_s$ and has no right inverse matrix. Obviously, ${{\bf{E}}}{\left( t \right)^T}$ is of full column rank and ${{\bf{D}}}{^T}$ is of full row rank. Thus, there exists a right inverse matrix of ${{\bf{D}}}{^T}$, however, no right inverse matrix of ${{\bf{E}}}{\left( t \right)^T}$. The Moore-Penrose matrix of ${{\bf{E}}}{\left( t \right)^T}$ is introduced instead and the precoding matrix is designed as
\begin{equation}\label{dl prebeamforming matrix deduce}
    {{\bf{F}}}\left( t \right) = {\left( {{{\bf{D}}}{^T}} \right)^\dag }{\left( {{{\bf{E}}}{{\left( t \right)}^T}} \right)^\dag }.
\end{equation}
Substitute ${{\bf{F}}}\left( t \right)$ with \eqref{dl prebeamforming matrix deduce} and \eqref{dl prebeamform receive model} becomes
\begin{equation}\label{dl prebeamforming result}
    \widetilde {\bf{y}}^d\left( t \right) = \left( {{{{\left( {{{\bf{a}}^d}} \right)}^T}}{{\bf{E}}}{{\left( t \right)}^T}{{\left( {{{\bf{E}}}{{\left( t \right)}^T}} \right)}^\dag }} \right){\bf{S}} + {{\bf{n}}}\left( t \right).
\end{equation}

After applying the precoding matrix, the dimension of ${\bf{S}}$ in \eqref{dl prebeamform receive model} reduces to ${\mathbb{C}^{{N_s}M \times \tau }}$. Due to the channel sparsity in angle-delay domain and a small $M$, the precoding matrix also reduces the training overhead, which does not scale with the number of BS antennas and the bandwidth. 
In principle we should guarantee $\tau  \ge {N_s}M$. For simplicity, the length of the training sequence $\tau$ satisfies $\tau  = {N_s}M$ and $\bf{S}$ is designed as a unitary matrix. Based on \eqref{dl prebeamforming result}, the unknown parameter ${{\bf{a}}^d}$ can be obtained by  least-square (LS) estimation 
\begin{equation}\label{dl parameter estimate}
    {\hat{\bf{a}}^d}={\left( {{\bf{S}}{}^T{{\bf{E}}}{{\left( t \right)}^\dag }{{\bf{E}}}\left( t \right)} \right)^\dag }\widetilde {\bf{y}}^d{\left( t \right)^T}.
\end{equation}
The UEs should feed back the estimated complex coefficient vector ${\hat{\bf{a}}^d}$ to the BS. Therefore, the DL channel after a $T_d$ CSI delay can be easily reconstructed at the BS as
\begin{equation}\label{dl channel reconstruct pred}
    {\widetilde {\mathbf{h}}^d}\left( {t + {T_d}} \right) = \sum\limits_{j = 1}^{{N_s}} {\sum\limits_{m = 1}^M {{\hat{a}}_m^d\left( j \right){e^{jw_m^d\left( j \right)\left( {t + {T_d}} \right)}}{{\mathbf{d}}_j}} }.
\end{equation}
The reconstructed DL channel will be utilized in the downlink precoding for data transmission. 
\section{Performance analysis}\label{sec5}
In our framework, the choice of $N_s$ and $L$ affect the channel prediction performance and the computation complexity. Hence, in this section, we focus on analyzing the impact of $N_s$ and $L$ on the prediction performance, the computational complexity {, and the feedback overhead.}
\subsection{Channel prediction performance analysis}
The DL channel prediction error is defined with the normalized mean square error (NMSE) metric as
\begin{equation}\label{prediction error no quantify}
    \varepsilon  \buildrel \Delta \over = 10\log \mathbb{E}\left\{ {\left\| {\frac{{{{\bf{h}}^d}\left( {t + {T_d}} \right) - {\rm{ }}{{\widetilde {\bf{h}}}^d}\left( {t + {T_d}} \right)}}{{{{\bf{h}}^d}\left( {t + {T_d}} \right)}}} \right\|_2^2} \right\}.
\end{equation}
Define ${t_p} = t + {T_d}$ as the channel prediction offset. We revisit the DL channel reconstruction equation \eqref{dl vector decompose} and substitute ${\bf{a}}^d\left(t\right)$ with \eqref{dl parameter estimate}
\begin{equation}\label{dl vector channel estimation no quantify}
{\widetilde {\bf{h}}^d}\left( {{t_p}} \right) = \widetilde {\bf{h}}_1^d\left( {{t_p}} \right) + \widetilde {\bf{h}}_2^d\left( {{t_p}} \right),
\nonumber
\end{equation}
where  
\begin{equation}\label{dl channel prediction first part}
    {\widetilde {\bf{h}}_1^d}\left( t_p \right) ={{\bf{D}}}{{\bf{E}}}\left( t_p \right){\left( {{{\bf{D}}}{{\bf{E}}}\left( t_p \right)} \right)^\dag }{\bf{h}}^d\left( t_p \right) = {{\bf{D}}}{{\bf{D}}}{^\dag }{\bf{h}}^d\left( t_p \right),
\end{equation}
\begin{equation}\label{dl channel prediction second part}
    {\widetilde {\bf{h}}_{2}^d}\left( t_p \right) = {{\bf{D}}}{{\bf{E}}}\left( t_p \right){\left( {{{\bf{S}}^\dag }} \right)^T}{{\bf{n}}}{\left( t_p \right)^T}.
\end{equation}
Then \eqref{prediction error no quantify} becomes
\begin{equation}\label{dl channel prediction error two part form}
   {\varepsilon } \buildrel \Delta \over = 10\log \mathbb{E}\left\{ {\left\| {\frac{{{\bf{h}}^d\left( t_p \right) - {{\widetilde {\bf{h}}}_{1}^d}\left( t_p \right) - {{\widetilde {\bf{h}}}_{2}^d}\left( t_p \right)}}{{{\bf{h}}^d\left( t_p \right)}}} \right\|_2^2} \right\} . 
\end{equation}
The following theorem gives the lower bound of the DL channel prediction error, which is derived by letting ${N_s}$ take the maximum value, i.e., ${N_s} = {N_f}{N_t}$. 
\begin{theorem}\label{Theorem Ns}
      The lower bound of the DL channel prediction error of the proposed CSI acquisition framework is
      \begin{equation}\label{theorem Ns equation}
         {\varepsilon } = 10\log \left( {\frac{{{\sigma ^2}{N_s}M}}{{\mathbb{E}{{\left\| {{\bf{h}}^d\left( t_p \right)} \right\|}_2}^2}}} \right). 
      \end{equation}
\end{theorem}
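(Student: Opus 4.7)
My plan is to exploit the additive decomposition $\widetilde{\bf{h}}^d(t_p) = \widetilde{\bf{h}}_1^d(t_p) + \widetilde{\bf{h}}_2^d(t_p)$ from \eqref{dl channel prediction first part}--\eqref{dl channel prediction second part} and show that, at $N_s = N_f N_t$, the deterministic signal-projection error $\|{\bf{h}}^d(t_p) - \widetilde{\bf{h}}_1^d(t_p)\|_2^2$ vanishes, so that $\varepsilon$ is dictated entirely by the zero-mean noise contribution $\widetilde{\bf{h}}_2^d(t_p)$. For the first claim I would observe that when all $N_f N_t$ angle-delay indices are retained, each column ${\bf{d}}_j = (I_{N_f} \otimes {\bf{R}}(\theta_j^d, \phi_j^d)) {\bf{u}}_j$ of $\bf{D}$ is the image of a distinct column of the unitary DFT matrix $\bf{Q}$ under a unitary, column-dependent diagonal rotation with unit-modulus entries; hence $\|{\bf{d}}_j\|_2 = 1$ and $\bf{D}$ is square and of full rank, which gives ${\bf{D}}{\bf{D}}^\dag = {\bf{I}}$ and $\widetilde{\bf{h}}_1^d(t_p) = {\bf{h}}^d(t_p)$ via \eqref{dl channel prediction first part}.

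For the noise term I would compute $\mathbb{E}\|\widetilde{\bf{h}}_2^d(t_p)\|_2^2 = \sigma^2 \,\mathrm{tr}\bigl({\bf{D}}\,{\bf{E}}(t_p)({\bf{S}}^\dag)^T ({\bf{S}}^\dag)^* {\bf{E}}(t_p)^H {\bf{D}}^H\bigr)$ by inserting $\mathbb{E}[{\bf{n}}(t_p)^T {\bf{n}}(t_p)^*] = \sigma^2 {\bf{I}}$ into the outer-product form of \eqref{dl channel prediction second part}. Because $\bf{S}$ is chosen unitary with $\tau = N_s M$, conjugating ${\bf{S}}{\bf{S}}^H = {\bf{I}}$ gives $({\bf{S}}^\dag)^T ({\bf{S}}^\dag)^* = {\bf{S}}^* {\bf{S}}^T = {\bf{I}}$. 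The block-diagonal structure of ${\bf{E}}(t_p)$, whose $1 \times M$ blocks ${\bf{e}}_j(t_p)$ have unit-modulus entries, yields ${\bf{E}}(t_p){\bf{E}}(t_p)^H = M {\bf{I}}_{N_s}$; combined with $\mathrm{tr}({\bf{D}}{\bf{D}}^H) = \sum_{j=1}^{N_s} \|{\bf{d}}_j\|_2^2 = N_s$, this produces $\mathbb{E}\|\widetilde{\bf{h}}_2^d(t_p)\|_2^2 = \sigma^2 M N_s$. Substituting this into \eqref{dl channel prediction error two part form} (only the noise term survives) delivers $\varepsilon = 10\log(\sigma^2 M N_s / \mathbb{E}\|{\bf{h}}^d(t_p)\|_2^2)$, provided the normalization in \eqref{prediction error no quantify} is read as the ratio of expected squared norms.

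The main obstacle I expect is rigorously justifying invertibility of $\bf{D}$ at $N_s = N_f N_t$: although each column is a unitary rotation of a column of $\bf{Q}$ and thus has unit norm, the rotations $I_{N_f} \otimes {\bf{R}}(\theta_j^d, \phi_j^d)$ differ across $j$, so the resulting columns need not be mutually orthogonal. I would either verify linear independence under generic angle parameters (ensuring $\det{\bf{D}} \neq 0$), or interpret the theorem as a limiting value in which the projector ${\bf{D}}{\bf{D}}^\dag$ converges to $\bf{I}$ as the angle-delay grid becomes complete. A secondary subtlety worth flagging is the precise meaning of the ratio inside the norm in \eqref{prediction error no quantify}, which must be read as an NMSE (a ratio of expected squared norms) in order for the trace-and-expectation manipulations above to proceed cleanly.
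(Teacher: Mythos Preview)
Your plan follows essentially the same route as the paper: set $N_s = N_fN_t$ so that the projection term $\widetilde{\bf h}_1^d(t_p)$ reproduces ${\bf h}^d(t_p)$ exactly, then compute the expected squared norm of the noise term $\widetilde{\bf h}_2^d(t_p)$ to obtain $\sigma^2 N_s M$. The paper carries out the noise calculation by expanding ${\bf D}{\bf E}(t_p)({\bf S}^\dag)^T$ column by column and then invoking a separate lemma (its Lemma~\ref{lemma digonal}) establishing ${\bf d}_i^H{\bf d}_j = \delta_{ij}$, whereas your trace argument needs only $\|{\bf d}_j\|_2=1$ at that step; this is a mild streamlining.

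The obstacle you flag, namely whether ${\bf D}$ is truly invertible when the column-dependent rotations differ, is precisely what the paper's Lemma~\ref{lemma digonal} is there to settle: it asserts that the $\{{\bf d}_j\}$ are in fact \emph{orthonormal}, because ${\bf d}_i^H{\bf d}_j = {\bf q}_i^H({\bf I}_{N_f}\otimes{\bf R}_{i,j}){\bf q}_j$ with ${\bf R}_{i,j}$ diagonal unitary, and the paper claims this vanishes for $i\neq j$ by orthogonality of the DFT columns. So you need not fall back on a generic-parameter or limiting argument; citing that lemma closes both the invertibility gap and, if you prefer the paper's expansion, the cross-term cancellation. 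Your reading of \eqref{prediction error no quantify} as a ratio of expected squared norms matches how the paper treats it in \eqref{appendxi dl prediction deduce form for part two}.
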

\begin{proof}
\quad \emph{Proof:} Please refer to Appendix \ref{Appendix Theroem}.
\end{proof}
Theorem \ref{Theorem Ns} gives the lower bound of the channel prediction error when all the angle-delay vectors in $\mathbf{Q}$ are taken into account. This condition may not be easy to achieve due to the huge feedback overhead and high complexity. 
Fortunately, the sparsity of multipath angles and delays ensures a much smaller $N_s$ in our framework. 
Another important parameter is the prediction order $L$, which may remain small in wideband massive MIMO regime, as shown in the following theorem. First define $N_P$ as the number of non-identical angle-delay structures of all the DL paths. 
\begin{theorem}\label{Thereom M}
     When $L=1,N_L=2,N_s=N_P$, the DL channel prediction error converges to zero as the number of BS antennas and bandwidth increase
     \begin{equation}\label{Theorem M asymptotic}
         \mathop {\lim }\limits_{{N_t},{N_f} \to \infty } \mathbb{E}\left\{ {\left\| {\frac{{{{\bf{h}}^d}\left( t_p \right) - {\rm{ }}{{\widetilde {\bf{h}}}^d}\left( t_p \right)}}{{{{\bf{h}}^d}\left( t_p \right)}}} \right\|_2^2} \right\} = 0.
     \end{equation}
\end{theorem}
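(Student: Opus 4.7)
My plan is to carry the error decomposition used in the proof of Theorem \ref{Theorem Ns} into the asymptotic regime $N_t,N_f\to\infty$, and to bound each of its pieces separately. Writing the numerator of the NMSE as $\|\mathbf{h}^d(t_p) - \widetilde{\mathbf{h}}_1^d(t_p) - \widetilde{\mathbf{h}}_2^d(t_p)\|_2^2$, with $\widetilde{\mathbf{h}}_1^d(t_p) = \mathbf{D}\mathbf{D}^\dagger \mathbf{h}^d(t_p)$ the angle-delay projection term and $\widetilde{\mathbf{h}}_2^d(t_p) = \mathbf{D}\mathbf{E}(t_p)(\mathbf{S}^\dagger)^T \mathbf{n}(t_p)^T$ the pilot-noise term, I will show that both contributions are asymptotically negligible compared with $\mathbb{E}\|\mathbf{h}^d(t_p)\|_2^2$. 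As a first consistency check I would verify that under $L=1$, $N_L=2$, $M=1$ the matrix-pencil eigenvalue problem (\ref{mp eigenvalue}) degenerates in every selected bin to the scalar ratio $\hat{g}_i^u(t_2)/\hat{g}_i^u(t_1)$. In the large-$(N_t,N_f)$ limit every distinct angle-delay structure occupies its own DFT bin, so each selected bin carries a single complex exponential and this ratio returns its pole exactly; hence the reconstructed $\mathbf{E}(t_p)$ is Doppler-error free.

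For the pilot-noise term I would reuse the bound $\mathbb{E}\|\widetilde{\mathbf{h}}_2^d(t_p)\|_2^2 = O(\sigma^2 N_s M)$ established in the proof of Theorem \ref{Theorem Ns} under the stated unitary-pilot normalization. With $N_s M = N_P$ fixed and $\mathbb{E}\|\mathbf{h}^d(t_p)\|_2^2 = \Theta(N_t N_f)$, since the vectorized channel is a sum of $P$ angle-delay atoms of dimension $N_t N_f$ whose entries have unit modulus and whose complex gains are bounded, the corresponding contribution to the NMSE is $O((N_t N_f)^{-1})$ and vanishes.

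The decisive step is the projection residual $\|(\mathbf{I}-\mathbf{D}\mathbf{D}^\dagger)\mathbf{h}^d(t_p)\|_2^2 / \|\mathbf{h}^d(t_p)\|_2^2$. My approach is a DFT-leakage estimate: after normalizing each $\mathbf{r}_p^d$ to unit norm, any two distinct atoms become asymptotically orthogonal because their inner product factorizes into Dirichlet-type kernels in angle and in delay that shrink as $O(1/N_t)$ and $O(1/N_f)$. Consequently the selection rule (\ref{ul angle-delay path index set}) with $N_s=N_P$ locks onto the $N_P$ distinct peak bins of the $N_P$ distinct atoms, and the leakage of each atom into the orthogonal complement of $\mathrm{span}(\mathbf{D})$ carries vanishing energy. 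Combining this with the noise bound yields $\varepsilon\to 0$. The main obstacle I anticipate is making this leakage estimate quantitative: one has to prove that the peak DFT bins of distinct paths are eventually distinct, and that the total off-peak energy is $o(\mathbb{E}\|\mathbf{h}^d(t_p)\|_2^2)$. Standard off-grid sinc/F\'ejer-kernel estimates applied path by path should close this gap.
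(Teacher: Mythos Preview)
Your proposal is correct and follows essentially the same route as the paper's proof: both hinge on the asymptotic orthogonality of distinct angle--delay atoms (the paper imports this from \cite{2020yinMobility}, you sketch it via Dirichlet/F\'ejer kernels), the fact that with $L=M=1$ each selected bin carries a single exponential so the scalar ratio $\hat g_i^u(t_2)/\hat g_i^u(t_1)$ recovers the Doppler exactly, and the scaling $\|\mathbf{h}^d(t_p)\|_2^2=\Theta(N_tN_f)$ against a numerator that stays bounded.

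The only organisational difference is in how the final bound is assembled. You keep the decomposition $\widetilde{\mathbf{h}}^d=\widetilde{\mathbf{h}}_1^d+\widetilde{\mathbf{h}}_2^d$ from the proof of Theorem~\ref{Theorem Ns} and treat the projection residual $\|(\mathbf{I}-\mathbf{D}\mathbf{D}^\dagger)\mathbf{h}^d\|_2^2$ and the pilot-noise term $O(\sigma^2 N_sM)$ separately. The paper instead applies the full unitary dictionary $\mathbf{D}_a$ to both $\mathbf{h}^d$ and $\widetilde{\mathbf{h}}^d$, observes that only the $N_P$ selected coordinates survive in the limit, bounds the resulting finite sums by path-gain constants independent of $N_t,N_f$, and divides by $N_tN_f(\sum_p|\beta_p^d|)^2$. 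Your split is arguably cleaner because it isolates the noise contribution explicitly and makes the role of $N_sM=N_P$ being fixed transparent; the paper's version is shorter but leaves the noise term implicit inside $\widetilde{\mathbf{h}}^d$. Either way the substance is the same.
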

\begin{proof}
\quad \emph{Proof:} Please refer to Appendix \ref{Appendix Theorem M}.
\end{proof}
Theorem \ref{Thereom M} gives an asymptotic channel prediction performance of our framework. 
When the number of antennas and bandwidth are finite, we introduce Remark \ref{remark M} for choosing a proper prediction order $L$.
\begin{remark}\label{remark M}
Given any $L$  satisfying $M \le L \le {N_L} - L$, the channel prediction error yields
\begin{equation}
\begin{split}
        &\mathbb{E}\left\{ {\left\| {\frac{{{{\bf{h}}^d}\left( t_p \right) - \widetilde {\bf{h}}_1^d\left( t_p \right)}}{{{{\bf{h}}^d}\left( t_p \right)}}} \right\|_2^2} \right\} - \mathbb{E}\left\{ {\frac{{{N_s}M{\sigma ^2}}}{{\left\| {{{\bf{h}}^d}\left( t_p \right)} \right\|_2^2}}} \right\} \le \varepsilon  \\
        &\le \mathbb{E}{}\left\{ {\left\| {\frac{{{{\bf{h}}^d}\left( t_p \right) - \widetilde {\bf{h}}_1^d\left( t_p \right)}}{{{{\bf{h}}^d}\left( t_p \right)}}} \right\|_2^2} \right\} + \mathbb{E}\left\{ {\frac{{{N_s}M{\sigma ^2}}}{{\left\| {{{\bf{h}}^d}\left( t_p \right)} \right\|_2^2}}} \right\}
\end{split}
\end{equation}
\end{remark}
\begin{proof}
\quad \emph{Proof:} Please refer to Appendix \ref{Appedix remark M}.
\end{proof}
We notice that the difference between the upper bound and lower bound of the channel prediction error is $\mathbb{E}\left\{ {\frac{{{2N_s}M{\sigma ^2}}}{{\left\| {{{\bf{h}}^d}\left( t_p \right)} \right\|_2^2}}} \right\}$, which is scaling with ${1 \mathord{\left/
 {\vphantom {1 {{\text{SNR}}}}} \right.
 \kern-\nulldelimiterspace} {{\text{SNR}}}}$ and is very small when the number of antennas and the bandwith are large. Normally $M$ cannot be known in advance and we assume $M=L$ on noise-free channel sample condition and $M<L$  on noisy channel sample condition, respectively. Therefore, greater $L$ cannot bring significant performance improvement. Remark \ref{remark M} indicates that we should choose as small $L$ as possible for a given $M$ and $N_L$. This observation is also confirmed in simulation of Sec. \ref{sec6}. 
 
 However, the limited number of antennas causes DFT mismatch problem. Thus, each ${\bf{d}}_j$ cannot accurately map the exact angle-delay structure of the DL channel. Therefore, each angle-delay vector may correspond to multiple Doppler frequency shifts. Bigger $L$ may better fit the corresponding Doppler frequency of each angle-delay vector. Thus, there lies a trade-off in the choice of $L$. Since the diversity of Doppler frequency shift cannot be known apriori in realistic applications, the optimal $L$ is difficult to obtain. Thus a relatively small $L$ satisfying $M \le L \le {N_L} - L$ is recommended.
\subsection{Complexity and feedback overhead analysis}
Our DL channel reconstruction framework consists of five parts, i.e., the DFT projection, the MP based Doppler estimation, the UL to DL transformation of angle-delay vectors and Dopplers, the DL training, and channel reconstruction. The DFT projection can be realized with fast Fourier transform (FFT), which has a complexity of $\mathcal{O}\left( {{N_f}{N_t}{{\log }_2}\left( {{N_f}{N_t}} \right)} \right)$. The complexity of MP method is mainly the SVD, i.e., $\mathcal{O}\left( {{{\left( {{N_L} - L} \right)}^2}L + \left( {{N_L} - L} \right){L^2}} \right)$. The complexity of parameter transformation procedure is $\mathcal{O}\left( {{N_s}{N_v} + {N_s}{N_h}} \right) + \mathcal{O}\left( {{N_f}^2{N_t}^2} \right) + \mathcal{O}\left( {{N_s}M{{\log }^2}\left( {{N_s}M} \right)} \right) + \mathcal{O}\left( {{N_f}{N_t}{N_s}^2{M^2}} \right)$. The DL training contains a matrix inversion and the complexity is $\mathcal{O}\left( { {{N_f}{N_t}{N_s}M}} \right)$. The channel reconstruction entails a matrix inversion and SVD which have the complexity of $\mathcal{O}\left( {{N_s}^3M + {N_s}^3{M^2}} \right) + \mathcal{O}\left( {{N_s}^3{M^2}} \right) + \mathcal{O}\left( {{N_s}^2{M^2}} \right) + \mathcal{O}\left( {{{\left( {{N_s}M} \right)}^{2.37}}} \right)$. {The overall complexity of the channel reconstruction procedure is thus ${\mathcal{O}}\left( {{N_f}{N_t}{N_s}^2{M^2}} \right)$. Obviously, our framework is of polynomial complexity and requires no iterative computing like CS methods or machine learning methods.}

{The feedback overhead is now analyzed for a given channel coherence time $T_{c}$. In Enhanced Type II codebook \cite{3gpp214}, the feedback overhead scales with $L_1L_2$, where $L_1,L_2$ are smaller than $N_t,N_f$. In classical CS methods like \cite{2015CS}, the feedback overhead depends on the reduced dimension $N_r$ of the channel and scales with $N_rN_f$. In other methods like NOMP \cite{2019hanEfficient} and deep learning \cite{2020DeepHan}, the feedback overhead depends on the number of paths $L_p$, which is large in rich scattering environments. And for a wideband system, the feedback overhead of these methods scales with $L_pN_f$. 
Thanks to the channel prediction capability, the feedback overhead of our framework is ${{{N_s}M} \mathord{\left/
 {\vphantom {{{N_s}M} {{N_{c}}}}} \right. 
 \kern-\nulldelimiterspace} {{N_{c}}}}$ scalars for one channel coherence time $T_{c}$, where ${N_{c}} \ge 1$ means only one set of ${N_s}M$ feedback coefficients is required for a time interval of ${N_{c}T_c}$. Hence, our framework has the advantage of reduced feedback over these traditional methods.}

\section{Numerical results}\label{sec6}
 In this section, we validate the proposed JADD framework with the industrial channel model of the cluster-delay-line-A (CDL-A) defined by 3GPP \cite{3gpp901} in a rich scattering scenario. Unless particularly specified, CDL-A channel model contains a total of 23 clusters with 20 paths inside each cluster. Following the $\rm{n}65$ new radio (NR) band in \cite{3gpp104}, the UL center frequency is 1.92 $\rm{GHz}$ and the DL center frequency is 2.11 $\rm{GHz}$. The bandwidth of UL and DL are both 20 $\rm{MHz}$ with a 30 $\rm{kHz}$ subcarrier spacing, implying that 51 resource blocks (RBs) are available per time slot. In this configuration, each time slot contains 14 OFDM symbols and is as short as 0.5 $\rm{ms}$, which denotes the SRS signal cycle length. The BS antenna configuration is $\left( {{N_v},{N_h},{P_t}} \right){\rm{ = }}\left( {2,8,2} \right)$, where  ${P_t}$ is the number of polarizations for each antenna element. The spacing between the antenns in vertical direction and horizontal direction are both 0.5${c \mathord{\left/
 {\vphantom {c {{f^d}}}} \right.\kern-\nulldelimiterspace} {{f^d}}}$. Table \ref{tab simulation config} gives all the other parameters used in our simulation unless otherwise specified. The DL precoding process is the Eigen Zero Forcing (EZF) \cite{2010TSPeigenvalue} and the UEs apply Minimum Mean Square Error-Interference Rejection Combining (MMSE-IRC) receiver. The performance of our framework is shown in two metrics, the SE and the prediction error (PE). The PE is defined by \eqref{prediction error no quantify}. {The spectral efficiency ${R_s}$ is calculated over a period of time and all subcarriers by 
 \begin{equation}
     {R_s}{\rm{  =  }}\mathbb{E}\left\{ {\sum\limits_{k = 1}^K {{\rm{log}}\left( {1 + \frac{{\left\| {\overline {\bf{h}} _k^d\left( {t,f} \right){{\bf{G}}_k}\left( {t,f} \right)} \right\|_2^2}}{{\sigma _k^2 + \sum\limits_{j \ne k}^K {\left\| {\overline {\bf{h}} _j^d\left( {t,f} \right){{\bf{G}}_j}\left( {t,f} \right)} \right\|} }}} \right)} } \right\},
     \nonumber
 \end{equation}
 where ${{\mathbf{G}}_k}\left( {t,f} \right)$ is the precoding matrix and $\overline {\mathbf{h}} _k^d\left( {t,f} \right)$ is the estimated channel. ${\sigma _k^2}$ is the noise power at UE $k$. }
\begin{table}[!t]
\centering \protect\protect\caption{System Parameters in Simulations}
\label{tab simulation config}
\begin{tabular}{|c|c|}
\hline Physical meaning & Default value\tabularnewline
\hline
Channel model & CDL-A\tabularnewline
\hline
Bandwidth & 20 $\rm{MHz}$\tabularnewline
\hline
UL carrier frequency & 1.92 $\rm{GHz}$\tabularnewline
\hline
DL carrier frequency & 2.11 $\rm{GHz}$\tabularnewline
\hline
Subcarrier spacing & 30 $\rm{kHz}$\tabularnewline
\hline
Resource block & 51\tabularnewline
\hline
Angle spread RMS & $\left( {{{87.1}^\circ },{{33.6}^\circ },{{102.1}^\circ },{{24.7}^\circ }} \right)$\tabularnewline
\hline
Delay spread & 300 $\rm{ns}$\tabularnewline
\hline
Number of paths & 460\tabularnewline
\hline
\tabincell{c}{Transmit antenna \\configuration} & \tabincell{c}{$\left( {{N_v},{N_h},{P_t}} \right)= \left( {2,8,2} \right)$, \\ polarization direction are ${0^\circ },{90^\circ }$}\tabularnewline
\hline
\tabincell{c}{Receive antenna \\configuration} & \tabincell{c}{$\left( {{N_v},{N_h},{P_t}} \right)= \left( {1,1,2} \right)$, \\ polarization direction are $\pm {45^\circ }$}\tabularnewline
\hline
Slot duration & 0.5 $\rm{ms}$\tabularnewline
\hline
Number of UEs & 8\tabularnewline
\hline
\end{tabular}
\end{table}

{Three baseline schemes are introduced as the benchmarks. All baselines follow the same channel parameters in Table \ref{tab simulation config}. The curves labeled with ``Enhanced Type II with perfect CSI" are the performances of Enhanced Type II codebook where perfect CSI is known by the UEs and there is no CSI delay.  The first baseline is Enhanced Type II codebook, yet only delayed CSI is known by the UEs.} The second baseline is utilzing an adaptive and parameter free recurrent neural structure (APF-RNS) based on deep learning \cite{2019MLcompare} for real-time prediction. In this method, the DL channel is predicted according to the temporal correlation with the recent history DL channel data with CSI delay, however, without any CSI compression or quantization. The APF-RNS network structure follows the configuration in \cite{2019MLcompare}, and adopts 32 long-short term memory (LSTM) units as the hidden layer. In addition, the history channel data for training and testing is generated by the same channel model with parameters of Table \ref{tab simulation config}. We perform the online training of APF-RNS with the known length of 20 and the prediction length of 1. {The last baseline is a traditional CS method, called TVAL3 \cite{2013TVAL3}. We apply this method to compress the dimension of the DL channel and recover it through a TVAL3 solver. The basic parameter setting is following \cite{2009TVAL3} and the compression ratio is $\rho=1/4$.} Our proposed scheme is referred to as JADD scheme in our simulations.

\begin{figure}[ht]
\centering
\includegraphics[width=3.2in]{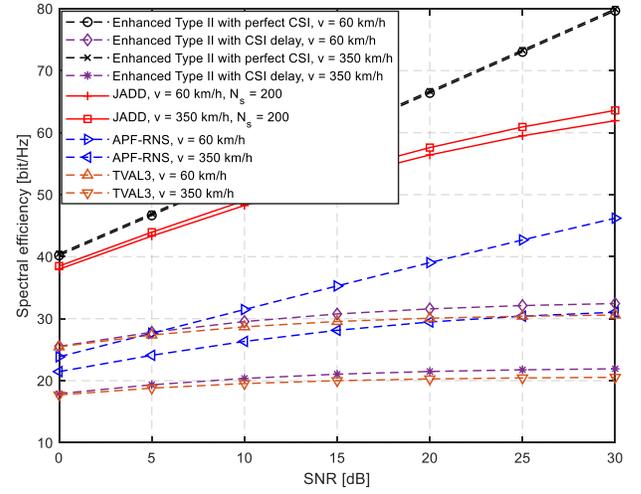}
\caption{{SE performance vs. SNR, different UE speeds, noise-free channel samples,  $L=2,T_d=5$ $\rm{ms}$.}}
\vspace{-0.3cm}
\label{figure_speed_se}
\end{figure}
{In Fig. \ref{figure_speed_se}, our framework is evaluated with the SE metric. In both high speed ($350$ km/h) and low speed ($60$ km/h), our JADD outperforms Enhanced Type II, APF-RNS and TVAL3, which demonstrate the superiority of our scheme in different mobility scenarios. We also conclude that CSI delay causes sever performance dropping in all baselines, especially in high mobility scenario. Note that there exists a small SE difference between the cases when the speed of UE is $60$ km/h and $350$ km/h. This phenomenon is caused by the different angle-delay sparsity and the corresponding Doppler frequency of the channel under different mobility scenarios. }
 
\begin{figure}[ht]
\centering
\includegraphics[width=3.2in]{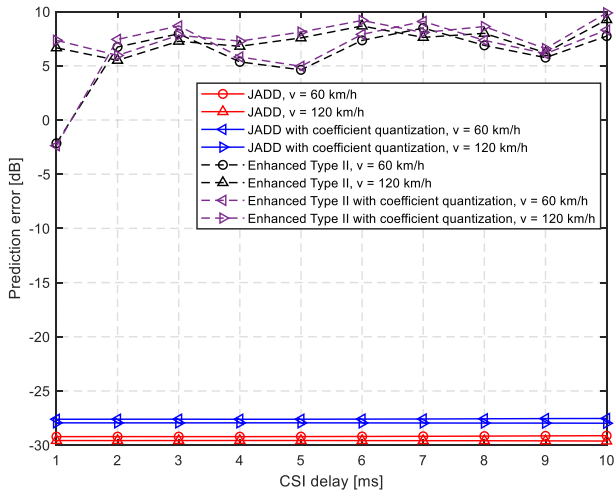}
\caption{{PE performance vs. CSI delay, noise-free channel samples, feedback coefficients quantized, $N_s=200,L=2$, ${\cal{C}}_a=4$, ${\cal{C}}_p=6$.}}
\vspace{-0.3cm}
\label{figure_delay_pe}
\end{figure}

 {Fig. \ref{figure_delay_pe} shows the PE performance under different CSI delay and different mobility levels. Following the coefficient quantization method in 5G \cite{3gpp214}, we use a ${\cal{C}}_a$-bit geometric sequence codebook and a ${\cal{C}}_p$-bit geometric sequence codebook to quantize the amplitude and the phase of the feedback coefficients, respectively.}

{Because of the CSI delay, the PE of Enhanced Type II is unsatisfactory, especially in high mobility scenarios. Compared with Enhanced Type II, our JADD can overcome the performance degradation brought by the CSI delay since our novel channel reconstruction framework \eqref{dl channel reconstruct pred}} can well predict the DL channel. We may conclude that our scheme can well adapt to different CSI delay even with  quantization errors of feedback coefficients. 

\begin{figure}[ht]
\centering
\includegraphics[width=3.2in]{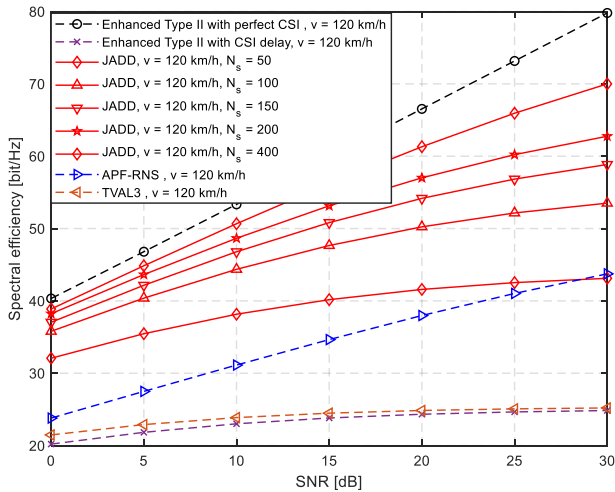}
\caption{{SE performance vs. SNR, different $N_s$, noise-free channel samples, $L=2$, $T_d=5$ $\rm{ms}$.}}
\vspace{-0.3cm}
\label{figure_ns_se}
\end{figure}
Fig. \ref{figure_ns_se} demonstrates the SE performance under different $N_s$ values while the performance of our framework always surpasses {the three baselines even with a small $N_s$. The SE of JADD quickly increases with $N_s$.}  

\begin{figure}[ht]
\centering
\includegraphics[width=3.2in]{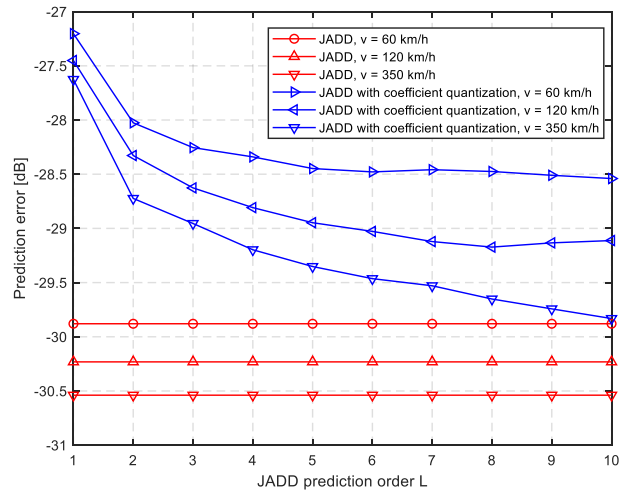}
\caption{{PE performance vs. prediction order $L$, noise-free channel samples, feedback coefficients quantized, $N_s=200,T_d=5$ $\rm{ms}$, ${\cal{C}}_a=4$, ${\cal{C}}_p=6$.}}
\vspace{-0.3cm}
\label{figure_predorder_pe}
\end{figure}

 {Fig. \ref{figure_predorder_pe} shows the PE performance of our method under different prediction order $L$. The results show that the value} of $L$ has little impact on the PE performance, which is aligned with Remark \ref{remark M}. We notice that the PE with coefficient quantizations varies a little with the value of $L$. {This phenomenon is reasonable. The quantization error results in the failure of Remark \ref{remark M} and bigger $L$ leads to higher Doppler frequency resolution, hence better PE performance.} However, in Fig. \ref{figure_predorder_pe}, bigger $L$ leads to only a small improvement of the PE with quantization but much heavier computation complexity. Therefore, in realistic application, a small $L$ is still preferable considering the trade-off between the complexity and prediction error.  
\begin{figure}[ht]
\centering
\includegraphics[width=3.2in]{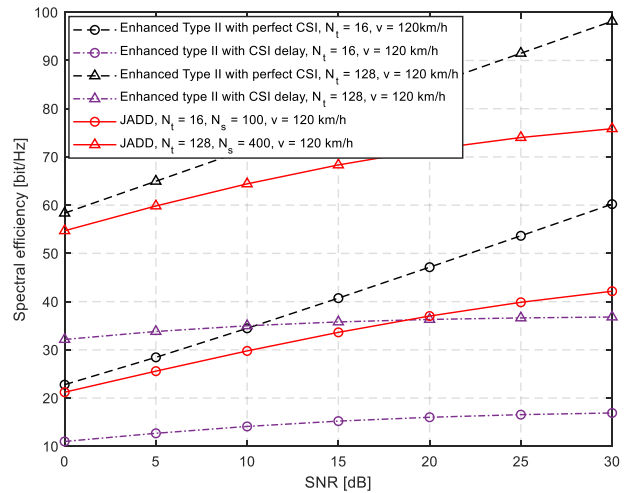}
\caption{SE performance vs. SNR, different BS antenna configurations, noise-free channel samples, $L=2$, $\eta = 0.99, T_d=5$ $\rm{ms}$.}
\vspace{-0.3cm}
\label{figure_antenna_se}
\end{figure}

{In the follow, we focus on evaluating the robustness of our method in the cases of different antenna configurations, different channel models and inaccurate CSI samples. }

The previous numerical results are based on the same antenna configuration and Fig. \ref{figure_antenna_se} shows the SE performance of JADD under different BS antenna configurations. Note that the value of $N_s$ should be carefully chosen to assure \eqref{ul angle-delay path index set} under different $N_t$ configuration given the threshold $\eta$. 
The results show that the SE performance of our framework always outperforms the Enhanced Type II codebook with CSI delay.

\begin{figure}[ht]
\centering
\includegraphics[width=3.2in]{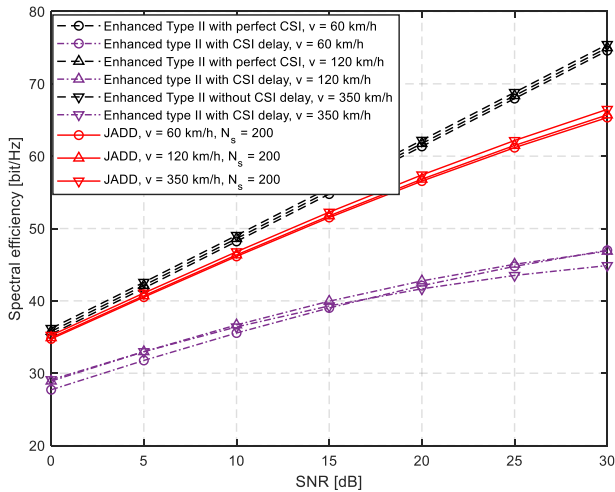}
\caption{SE performance under CDL-D channel model, noise-free channel samples, $L=2,T_d=5$ $\rm{ms}$.}
\vspace{-0.3cm}
\label{figure_cdlD_se}
\end{figure}
In a rich scattering environment, our framework performs well as discussed above. In fact, different scattering environments render different physical features of the channel. The channel model of CDL-D \cite{3gpp901} which contains a line of sight (LOS) path is also considered. The numerical result is demonstrated in Fig. \ref{figure_cdlD_se}. Our framework still performs well in this case. We notice that the same $N_s$ in CDL-D achieves better SE performance as in CDL-A. This phenomenon is reasonable because CDL-D channel has higher angle-delay sparsity.

\begin{figure}[ht]
\centering
\includegraphics[width=3.2in]{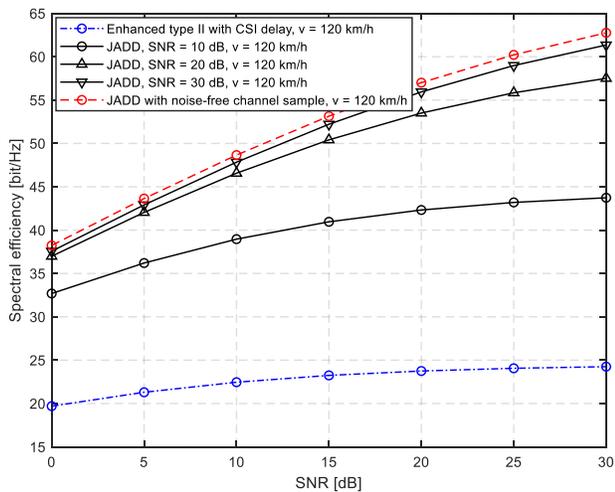}
\caption{SE performance vs. SNR, noisy channel samples with sampling SNR= $\{$10 dB, 20 dB, 30 dB$\}$, $N_s=200,L=2,T_d=5$ $\rm{ms}$.}
\vspace{-0.3cm}
\label{figure_noise_se}
\end{figure}

So far, the previous numerical results are achieved under the noise-free channel sample condition. Now we show the performance of JADD under noisy channel sample case. The channel sample noise is modeled as i.i.d. Gaussian and the noise power is characterized by channel sample SNR. Fig. \ref{figure_noise_se} shows the SE performance of our framework under different channel sample SNRs. Our framework outperforms the Enhanced Type II codebook with CSI delay. 
Therefore, our scheme is robust to noisy channel samples.

\section{Conclusion}\label{sec7}
In this paper we proposed a novel channel prediction framework to address the curse of mobility in FDD massive MIMO, which suffers from the problems of both the CSI aging and large training overhead. 
Our framework combined the merits of partial channel reciprocity in FDD and the angle-delay-Doppler structure of the multipath channel. The DL channel was reconstructed with parameters extracted from the UL channel parameters and some coefficients fed back from the UEs. In particular, the BS calculates the angle-delay vectors of the UL channel and estimates the Doppler frequency shifts using the MP method. A wideband JADD precoding matrix was then proposed to facilitate the acquisition of the desired coefficients, and meanwhile, reduce the training overhead. At the UE side, only some scalar coefficients were computed and fed back to the BS. Our asymptotic analysis showed the prediction error converges to zero as the number of BS antennas increases while only two UL channel samples are available. A scheme to choose a proper prediction order was also discussed. The numerical results demonstrated that our framework works well even in high-mobility scenarios with large CSI delays.

\appendices
\section{Proof of Proposition \ref{Propo dl anguar-delay}}\label{Appendix Propo dl angle-delay}
\begin{proof}
\emph{Proof:} First we need to derive the relationship between $i$ and $j$. The $i$-th column of DFT matrix is 
\begin{equation}\label{angle-delay structure prove first}
{{\bf{q}}_i} = {\left[ {\begin{array}{*{20}{c}}
{{{\bf{w}}_i}}&{{e^{ - j\frac{{2\pi }}{{{N_f}}}i}}{{\bf{w}}_i}}& \cdots &{{e^{ - j\frac{{2\pi }}{{{N_f}}}\left( {{N_f} - 1} \right)i}}{{\bf{w}}_i}}
\end{array}} \right]^T},
\end{equation}
where ${{\bf{w}}_i} = {\left[ {\begin{array}{*{20}{c}}
1&{{e^{ - j\frac{{2\pi }}{{{N_t}}}i}}}& \cdots &{{e^{ - j\frac{{2\pi }}{{{N_t}}}i\left( {{N_t} - 1} \right)}}}
\end{array}} \right]^T}$. Comparing ${{\bf{q}}_i}$ in \eqref{angle-delay structure prove first} and ${{\bf{r}}_p}$ in \eqref{ul vector channel}, angle information $\theta _p^u,\phi _p^u$ and delay information $\tau _p^u$ are closely related to elements in ${{\bf{q}}_i}$. Each angle-delay vector index $i$ corresponds to an index combination $\left( {{i^a},{i^h},{i^v}} \right)$, where ${i^a}$ denotes the angle index, ${i^h}$ denotes the horizontal angle index and ${i^v}$ denotes the vertical angle index. The relationship between $i^a$ and $i$ is
\begin{equation}\label{angle-delay path index change}
    {i^a} = \left\{ \begin{array}{l}
\bmod \frac{i}{{{N_t}}},\bmod \frac{i}{{{N_t}}} \ne 0,\\
{N_t},\bmod \frac{i}{{{N_t}}} = 0.
\end{array} \right.
\end{equation}
Then, the horizontal and the vertical angle index are calculated as
\begin{equation}\label{angle-delay index change horizontal and vertical component}
{i^h} = \left\{ {\begin{array}{*{20}{l}}
{\frac{{\left( {{i^a} - {i^v}} \right)}}{{{N_v}}} + 1,{i^v} \ne 0,}\\
{\frac{{{i^a}}}{{{N_v}}},{i^v} = 0,}
\end{array}} \right.
\end{equation}

\begin{equation}\label{angle-delay index change vertical component}
{i^v} = \left\{ {\begin{array}{*{20}{l}}
{\bmod \frac{{{i^a}}}{{{N_v}}},\bmod \frac{{{i^a}}}{{{N_v}}} \ne 0,}\\
{{N_v},\bmod \frac{{{i^a}}}{{{N_v}}} = 0.}
\end{array}} \right.
\end{equation}
Thus, each UL angle-delay  $i$ map a DL channel angular index $j,j\in\left\{{1,2\cdots,N_s}\right\}$ as
\begin{equation}\label{dl angle index}
\sin {\theta _j}\sin {\phi _j} = \frac{{{i^h}c}}{{{l_h}{f^u}{N_h}}},\cos {\theta _j} = \frac{{c{i^v}}}{{{l_v}{f^u}{N_v}}}.
\end{equation}
According to \eqref{steer vector rotation}, the DL angle-delay structure can be calculated as
\begin{small}
\begin{equation}\label{dl angle-delay path}
\begin{split}
&{{\boldsymbol{r}}^d}\left( {{\theta _j},{\phi _j}} \right) = \left( {{{\bf{I}}_{{N_f}}}{\bf{c}}\left( {\tau _i^u} \right)} \right) \otimes \left( {\left( {{{\bf{R}}_h}\left( {{\theta _j},{\phi _j}} \right) \otimes {{\bf{R}}_v}\left( {{\theta _j}} \right)} \right){{\boldsymbol{\alpha}}^u}\left( {{\theta _i},{\phi _i}} \right)} \right)\\
 &= \left( {{{\bf{I}}_{{N_f}}} \otimes \left( {{{\bf{R}}_h}\left( {{\theta _j},{\phi _j}} \right) \otimes {{\bf{R}}_v}\left( {{\theta _j}} \right)} \right)} \right) \cdot \left( {{\bf{c}}\left( {\tau _i^u} \right) \otimes {{\boldsymbol{\alpha}}^u}\left( {{\theta _i},{\phi _i}} \right)} \right).
\end{split}
\nonumber
\end{equation}
\end{small}
We use the angle-delay vector to approximate the DL angle-delay structure
\begin{equation}\label{dl angle-delay path approximate}
    {\bf{d}}_j \buildrel \Delta \over = \left( {{{\bf{I}}_{{N_f}}} \otimes \left( {{{\bf{R}}_h}\left( {{\theta _{{j}}},{\phi _{{j}}}} \right) \otimes {{\bf{R}}_v}\left( {{\theta _{{j}}}} \right)} \right)} \right){{\bf{u}}_j}.
\end{equation}
Then the Proposition \ref{Propo dl anguar-delay} is proved.
\end{proof}
\section{Proof of Theorem \ref{Theorem Ns}}\label{Appendix Theroem}
\begin{proof}
\emph{Proof:} The vectorized DL channel can be estimated by a series of angle-delay vectors superposition mapped by the columns of DFT matrix $\bf{Q}$. As $N_s$ increases, angle-delay resolution improves. The power leakage problem alleviates, thus, ${\varepsilon }$ decreases. 

When ${N_s} = {N_f}{N_t}$, ${{\bf{D}}}{{\bf{D}}}^\dag = {{\bf{I}}_{{N_f}{N_t}}}$ holds. The prediction error \eqref{prediction error no quantify} becomes
\begin{equation}\label{appendxi dl prediction deduce form for part two}
\begin{split}
  \varepsilon & = 10\log \left( {\frac{{{\mathbb{E}}\left\| {{{\mathbf{h}}^d}\left( t_p \right) - {{\mathbf{h}}^d}\left( t_p \right){\mathbf{I}} - \widetilde {\mathbf{h}}_2^d\left( t_p \right)} \right\|_2^2}}{{{\mathbb{E}}\left\| {{{\mathbf{h}}^d}\left( t_p \right)} \right\|_2^2}}} \right)\hfill \\
   &= 10\log \left( {\frac{{{\mathbb{E}}\left\{ {\widetilde {\mathbf{h}}_2^d\left( t_p \right)\widetilde {\mathbf{h}}_2^d{{\left( t_p \right)}^H}} \right\}}}{{{\mathbb{E}}\left\| {{{\mathbf{h}}^d}\left( t_p \right)} \right\|_2^2}}} \right) \hfill \\
  & = 10\log \left( {\frac{{{\mathbb{E}}\left\| {{\mathbf{DE}}\left( t_p \right){{\left( {{{\mathbf{S}}^\dag }} \right)}^T}} \right\|_2^2{\mathbb{E}}\left\| {{\mathbf{n}}{{\left( t_p \right)}^T}} \right\|_2^2}}{{{\mathbb{E}}\left\| {{{\mathbf{h}}^d}\left( t_p \right)} \right\|_2^2}}} \right). \hfill \\ 
\end{split}
\end{equation}
The training sequence matrix is denoted in column vectors and row vectors form
\begin{equation}\label{defination for S}
\begin{split}
{\left( {{{\bf{S}}^\dag }} \right)^T} &= {\left[ {\begin{array}{*{20}{c}}
{{{\bf{s}}_{1,1}}}&{{{\bf{s}}_{1,2}}}& \cdots &{{{\bf{s}}_{{N_s},M}}}
\end{array}} \right]^T}\\
 &= \left[ {\begin{array}{*{20}{c}}
{{\bf{s}}\left( 1 \right)}&{{\bf{s}}\left( 2 \right)}& \cdots &{{\bf{s}}\left( {{N_s}M} \right)}
\end{array}} \right],
\end{split}
\end{equation}
where ${{\bf{s}}_{j,m}} = \left[ {\begin{array}{*{20}{c}}
{{s_{j,m}}\left( 1 \right)}&{{s_{j,m}}\left( 2 \right)}& \cdots &{{s_{j,m}}\left( {{N_s}M} \right)}
\end{array}} \right]$ and \\${\bf{s}}\left( n \right) = {\left[ {\begin{array}{*{20}{c}}
{{s_{1,1}}\left( n \right)}&{{s_{1,2}}\left( n \right)}& \cdots &{{s_{{N_s},M}}\left( n \right)}
\end{array}} \right]^T}$. Calculate the matrix
\begin{equation}\label{structure of matrix C*E*S}
\begin{array}{l}
{{\bf{D}}}{{\bf{E}}}\left( t_p \right){\left( {{{\bf{S}}^\dag }} \right)^T}
{ = }{\left[ {\begin{array}{*{20}{c}}
{\sum\limits_{j = 1}^{{N_s}} {{\bf{d}}_j\sum\limits_{m = 1}^M {{e^{jw_{j,m}^d\left( t_p \right)}}{s_{j,m}}\left( 1 \right)} } }\\
{\sum\limits_{j = 1}^{{N_s}} {{\bf{d}}_j\sum\limits_{m = 1}^M {{e^{jw_{j,m}^d\left( t_p \right)}}{s_{j,m}}\left( 2 \right)} } }\\
 \cdots \\
{\sum\limits_{j = 1}^{{N_s}} {{\bf{d}}_j\sum\limits_{m = 1}^M {{e^{jw_{j,m}^d\left( t_p \right)}}{s_{j,m}}\left( {{N_s}M} \right)} } }
\end{array}} \right]^T}
\end{array}.
\end{equation}
Let $\sum\limits_{m = 1}^M {{e^{jw_{j,m}^d\left( t_p \right)}}{s_{j,m}}\left( n \right)}  = {\lambda _{j,n}}\left( t_p \right)$ and obtain 
\begin{equation}\label{lemma digonal C*E*S structure}
    \begin{split}
        &{\left\| {{\bf{DE}}\left( t_p \right){{\left( {{\bf{S}}{{\left( t_p \right)}^\dag }} \right)}^T}} \right\|_2^2}\\ &= \sum\limits_{n = 1}^{{N_s}M} {\left( {\left( {\sum\limits_{j = 1}^{{N_s}} {{\lambda _{j,n}}{{\left( t_p \right)}^H}{\bf{d}}_j^H} } \right)\left( {\sum\limits_{j = 1}^{{N_s}} {{\lambda _{j,n}}\left( t_p \right){{\bf{d}}_j}} } \right)} \right)}.
    \end{split}
\end{equation}
The following lemma shows that ${\bf{d}}_j$ has unit norm and mutual orthogonality. 
\begin{lemma}\label{lemma digonal}
        For any $i,j  = 1, \cdots, N_s$, we have 
        \begin{equation}\label{lemma digonal result}
            {\bf{d}}_i{^H}{\bf{d}}_j = \left\{ \begin{array}{l}
            1,i = j\\
            0,i \ne j
            \end{array} \right.. 
        \end{equation}
    \end{lemma}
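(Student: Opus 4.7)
\emph{Proof plan for Lemma 2.}

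The plan is to exploit the Kronecker product structure of $\mathbf{u}_j$ together with the block-diagonal nature of $\mathbf{I}_{N_f}\otimes\mathbf{R}(\theta_j^d,\phi_j^d)$. First I would apply the mixed-product property of the Kronecker product to rewrite
\begin{equation}
\mathbf{d}_j \;=\; \mathbf{w}_{k_f^{(j)}}^{(f)} \,\otimes\, \bigl(\mathbf{R}_h^{(j)}\mathbf{w}_{k_h^{(j)}}^{(h)}\bigr) \,\otimes\, \bigl(\mathbf{R}_v^{(j)}\mathbf{w}_{k_v^{(j)}}^{(v)}\bigr),
\end{equation}
where $\mathbf{w}_{k_f^{(j)}}^{(f)}$, $\mathbf{w}_{k_h^{(j)}}^{(h)}$, $\mathbf{w}_{k_v^{(j)}}^{(v)}$ are the DFT columns indexed by the triple $(k_f^{(j)},k_h^{(j)},k_v^{(j)})$ associated to $\mathbf{u}_j$ (obtained from $j$ through \eqref{angle-delay path index change}--\eqref{angle-delay index change vertical component}), and $\mathbf{R}_h^{(j)}$, $\mathbf{R}_v^{(j)}$ are the diagonal rotation matrices in \eqref{horizontal steer vector rotation}--\eqref{vertical steer vector rotation} evaluated at $(\theta_j^d,\phi_j^d)$. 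Because of this Kronecker factorization, $\mathbf{d}_i^H\mathbf{d}_j$ splits into a product of three scalar inner products, one per physical dimension.

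For the case $i=j$, each factor equals one: the frequency factor because $\mathbf{w}_{k_f^{(j)}}^{(f)}$ is a unit-norm DFT column, and the two spatial factors because $\mathbf{R}_h^{(j)}$ and $\mathbf{R}_v^{(j)}$ are diagonal with unit-modulus entries, hence unitary, so they preserve the unit norm of the DFT columns. Thus $\|\mathbf{d}_j\|_2^2=1$.

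For $i\neq j$, the two index triples must differ in at least one coordinate. If $k_f^{(i)}\neq k_f^{(j)}$, the frequency factor $(\mathbf{w}_{k_f^{(i)}}^{(f)})^H\mathbf{w}_{k_f^{(j)}}^{(f)}$ vanishes by the orthogonality of DFT columns—this works cleanly because the rotation matrix acts as $\mathbf{I}_{N_f}$ on the frequency block. If the frequency indices coincide but a spatial index differs, I would substitute the angle--index relation \eqref{dl angle index} into the exponents of $\mathbf{R}_h^{(j)}\mathbf{w}_{k_h^{(j)}}^{(h)}$ and $\mathbf{R}_v^{(j)}\mathbf{w}_{k_v^{(j)}}^{(v)}$; after the UL DFT phase and the rotation phase are combined, each rotated vector reduces to a Vandermonde vector whose $n$-th entry is $\tfrac{1}{\sqrt{N_h}}e^{j 2\pi (n-1)(k_h^{(j)}-1)f^d/(f^u N_h)}$ (and analogously for the vertical factor). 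The required inner product then becomes a geometric sum that collapses to zero whenever the corresponding angular indices differ, closing the case.

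The main obstacle is precisely this last step: the rotated vectors are not ordinary DFT columns at the UL frequency but at the effective frequency $f^d$, so the orthogonality does not follow from the unitarity of $\mathbf{W}(N_h)$ or $\mathbf{W}(N_v)$ alone. One has to combine the DFT-column phase with the rotation phase using \eqref{dl angle index} before the geometric-sum cancellation is visible, and this is where the construction of $\mathbf{d}_j$ from the partial-reciprocity angle--index mapping is used in an essential way.
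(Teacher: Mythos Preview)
Your Kronecker decomposition and the handling of the $i=j$ case are correct and in fact more explicit than the paper's argument; the paper simply writes $\mathbf{d}_i^H\mathbf{d}_j=\mathbf{q}_i^H(\mathbf{I}_{N_f}\otimes\mathbf{R}_{i,j})\mathbf{q}_j$ with $\mathbf{R}_{i,j}=(\mathbf{R}_h^{(i)})^H\mathbf{R}_h^{(j)}\otimes(\mathbf{R}_v^{(i)})^H\mathbf{R}_v^{(j)}$ and observes that $\mathbf{R}_{i,i}=\mathbf{I}$, which is the same content. Your frequency-index sub-case for $i\neq j$ is also fine, since the rotation acts as $\mathbf{I}_{N_f}$ on the delay block.

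The genuine gap is precisely in the angular sub-case you flag as the main obstacle. After you combine the DFT phase and the rotation phase through \eqref{dl angle index}, the rotated horizontal vector does become the Vandermonde vector you describe, but the inner product is then
\[
\frac{1}{N_h}\sum_{n=0}^{N_h-1}e^{\,j2\pi n(k_h^{(j)}-k_h^{(i)})f^d/(f^uN_h)}
=\frac{1}{N_h}\,\frac{1-e^{\,j2\pi(k_h^{(j)}-k_h^{(i)})f^d/f^u}}{1-e^{\,j2\pi(k_h^{(j)}-k_h^{(i)})f^d/(f^uN_h)}},
\]
which vanishes only if $(k_h^{(j)}-k_h^{(i)})f^d/f^u$ is an integer. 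For a generic FDD carrier pair (e.g.\ $f^d/f^u\approx 1.099$ in the paper's own setup) this is not an integer, so the sum does \emph{not} collapse to zero and your closing step fails. For comparison, the paper disposes of the $i\neq j$ case in one sentence, asserting that the unitarity of $\mathbf{R}_h,\mathbf{R}_v$ together with $\mathbf{q}_i^H\mathbf{q}_j=0$ makes the conclusion ``obvious''; but that inference is itself not valid in general, because the rotation applied to $\mathbf{q}_j$ depends on $j$ and a $j$-dependent diagonal unitary need not preserve orthogonality to a fixed $\mathbf{q}_i$. In other words, you have not taken a different route from the paper so much as unpacked the same argument far enough to expose the step the paper treats as obvious, and that step does not go through as stated.
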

    \begin{proof}
 \quad \emph{Proof:} Eq. \eqref{lemma digonal result} is derived as
        \begin{equation}\label{lemma 3 ci*cj}
            \begin{split}
            &{\bf{d}}_i^H{\bf{d}}_j
            = {{\bf{q}}_i}^H{\left( {\left( {{{\bf{I}}_{{N_f}}} \otimes \left( {{{\bf{R}}_h}\left( {{\theta _{{i}}},{\phi _i}} \right) \otimes {{\bf{R}}_v}\left( {{\theta _{{i}}}} \right)} \right)} \right)} \right)^H}\\
             &\cdot \left( {{{\bf{I}}_{{N_f}}} \otimes \left( {{{\bf{R}}_h}\left( {{\theta _{{j}}},{\phi _j}} \right) \otimes {{\bf{R}}_v}\left( {{\theta _{{j}}}} \right)} \right)} \right){{\bf{q}}_j}\\
              &= {{\bf{q}}_i}^H\left( {{{\bf{I}}_{{N_f}}} \otimes {{\bf{R}}_{i,j}}} \right){{\bf{q}}_j},
            \end{split}
        \end{equation}
        where 
        \begin{equation}\label{R_i_j equation}
        {{\bf{R}}_{i,j}} = \left( {{{\bf{R}}_h}{{\left( {{\theta _{{i}}},{\phi _i}} \right)}^H}{{\bf{R}}_h}\left( {{\theta _{{j}}},{\phi _j}} \right)} \right)\otimes \left( {{{\bf{R}}_v}{{\left( {{\theta _{{i}}}} \right)}^H}{{\bf{R}}_v}\left( {{\theta _{{j}}}} \right)} \right).
        \nonumber
        \end{equation}
        When $i = j$, as ${{\bf{q}}_i}$ is a column of the DFT matrix ${\bf{Q}}$, \eqref{lemma 3 ci*cj} becomes
        \begin{equation}\label{lemma 3 ci*cj i=j}
            \begin{split}
            {\bf{d}}_j^H{\bf{d}}_j &= {{\bf{q}}_i}^H\left( {{{\bf{I}}_{{N_f}}} \otimes \left( {{{\bf{I}}_{{N_h}}} \otimes {{\bf{I}}_{{N_v}}}} \right)} \right){{\bf{q}}_i}=1.\\
            \end{split}
        \end{equation}
        Since ${{\bf{R}}_h}\left( {\theta ,\phi }  \right)$ in \eqref{horizontal steer vector rotation}, ${{\bf{R}}_v}\left( {\theta } \right)$ in \eqref{vertical steer vector rotation} are both unitary matrices  and ${{\bf{q}}_i}^H{{\bf{q}}_j}\left( t \right) =0, \forall i\ne j$, obviously ${\bf{d}}_i^H{\bf{d}}_j = 0 , \forall i\ne j$ holds.
    \end{proof}

Using Lemma \ref{lemma digonal}, the following expectation can be easily calculated as
\begin{equation}\label{lemma 2 C*E*S final}
\begin{gathered}
  {\mathbb{E}}{\left\| {{\mathbf{DE}}\left( t_p \right){{\left( {{\mathbf{S}}{{\left( t_p \right)}^\dag }} \right)}^T}} \right\|_2^2} \hfill \\
   = \sum\limits_{n = 1}^{{N_s}M} {\left( {\left( {\sum\limits_{j = 1}^{{N_s}} {\sum\limits_{m = 1}^M {{e^{ - jw_{j,m}^d\left( t_p \right)}}{s_{j,m}}{{\left( n \right)}^H}} } } \right) \cdot } \right.}  \hfill \\
  \left. {\left( {\sum\limits_{j = 1}^{{N_s}} {\sum\limits_{m = 1}^M {{e^{jw_{j,m}^d\left( t_p \right)}}{s_{j,m}}\left( n \right)} } } \right)} \right) \hfill \\
   = \sum\limits_{n = 1}^{{N_s}M} {{\mathbf{s}}{{\left( n \right)}^H}{\mathbf{s}}\left( n \right)}  = {N_s}M. \hfill \\ 
\end{gathered} 
\end{equation}
The equation \eqref{appendxi dl prediction deduce form for part two} can be easily calculated as Theorem \ref{Theorem Ns}.
\end{proof}
\section{Proof of Theorem \ref{Thereom M}}\label{Appendix Theorem M}
\begin{proof}
\emph{Proof:} When ${N_t},{N_f} \to \infty $, the following relationship is implicitly holds
  \begin{equation}\label{aysmptotic Np}
      {\widetilde {\bf{h}}^u}\left( t_p \right) = \sum\limits_{i = 1}^{{N_f}{N_t}} {\hat g_i^u\left( t_p \right){{\bf{q}}_i}}  = \sum\limits_{i \in {{\cal S}}} {\hat g_i^u\left( t_p \right){{\bf{q}}_i},\left| {{\cal S}} \right| = {{\cal{N}}_P}},  
  \end{equation}
 where ${\bf{q}}_i$ is $i$-th column of the DFT matrix $\bf{Q}$, $\hat g_i^u\left( t_p \right) = {{\bf{q}}_i}^H{{\bf{h}}^u}\left( t_p \right)$ and ${\cal{S}}$ is set of indices of which $\hat g_i^u\left( t_p \right) \ne 0$. Define the space generated by ${\bf{q}}_i\in{\cal{S}}$ as 
 \begin{equation}\label{appendix theorem M space define for angle-delay}
     {{{\cal U}}_P} = {\rm{span}}\left\{ {{\bf{q}}_i:i \in {{{\cal S}}}} \right\}.
 \end{equation}
Define $N_P$ as the number of non-identical angle-delay structure of the UL channel. The angle-delay structure of two non-intertwined paths $p,q$ are asymptotically orthogonal \cite{2020yinMobility}
 \begin{equation}\label{appendix M angle-delay asymptotic}
     \mathop {\lim }\limits_{{N_t},{N_f} \to \infty } \frac{{{{\left( {{\bf{r}}_p^u} \right)}^H}{\bf{r}}_q^u}}{{\sqrt {{N_t}{N_f}} }} = 0.
 \end{equation}
 According to \cite{2020yinMobility}, the angle-delay structure of two paths $p,q$ holds the orthogonality after projecting to an orthogonal space $\bf{S}$ which is a DFT matrix. In fact, the angle-delay structure ${\bf{r}}_p^u$ lies in the space ${{{\cal U}}_P}$ when ${N_t},{N_f} \to \infty $. Therefore, the following relationship holds 
\begin{equation}\label{appendix M angle-delay in d_j space}
\mathop {\lim }\limits_{{N_t},{N_f} \to \infty } \frac{{\left\| {{{\bf{U}}_p}^H{\bf{r}}_p^u} \right\|_2^2}}{{{N_t}{N_f}}} = 1,\mathop {\lim }\limits_{{N_t},{N_f} \to \infty } \frac{{\left\| {{{\bf{U}}_q}^H{\bf{r}}_p^u} \right\|_2^2}}{{{N_t}{N_f}}} = 0,
\end{equation}
where ${\bf{U}}_p$ is the sub-matrix formed by ${\bf{q}}_i,i\in{\cal{S}}_p$. The index set ${\cal{S}}_p$ is defined by
\begin{small}
\begin{equation}\label{appendix M index set of p define}
\mathop {\lim }\limits_{{N_t},{N_f} \to \infty } \left| {\frac{{{q_i}^H{\bf{r}}_p^u}}{{{N_t}{N_f}}}} \right| > 0,i \in {{\cal{S}}_p},\mathop {\lim }\limits_{{N_t},{N_f} \to \infty } \left| {\frac{{{q_i}^H{\bf{r}}_p^u}}{{{N_t}{N_f}}}} \right| = 0,i \notin {{\cal{S}}_p}.
\nonumber
\end{equation}
\end{small}
The sub-matrix ${\bf{U}}_q$ and set ${\cal{S}}_q$ are defined likewise. Furthermore, $N_P={\cal{N}}_P$ due to the orthogonality between ${\cal{S}}_p$ and ${\cal{S}}_q$.

The condition $L=1,N_L=2$ gives the lower bound of matrix pencil parameter configuration, which means that only one pole needs to be estimated, i.e., $M=1$. Then, the UL channel \eqref{ul angle-delay path selection} becomes
 \begin{equation}\label{appendix M ul prediction}
     {\widetilde {\bf{h}}^u}\left( t_p \right) = \sum\limits_{i \in {{\cal S}}} {{a^u}\left( i \right){z^u}{{\left( i \right)}^{t_p}}{{\bf{q}}_i}} .
 \end{equation}
 Comparing to \eqref{ul dft approximate}, \eqref{appendix M ul prediction} indicates that each angle-delay vector ${\bf{q}}_i$ corresponds to only one Doppler frequency shift $z^u\left(i\right)$. Given any ${N_d}$  the pole $z^u\left(i\right)$ is calculated through Algorithm \ref{alg1}, which means the Doppler frequency shift corresponding to ${\bf{q}}_i$ is obtained.
 
 Then we focus on the DL channel analysis. In Lemma \ref{lemma digonal}, we have proved that the DL angle-delay vector ${\bf{d}}_j$ shares the same orthogonality like ${\bf{q}}_i$, hence, the asymptotic properties \eqref{appendix M angle-delay asymptotic}-\eqref{appendix M index set of p define} hold for ${\bf{r}}_p^d$ and ${\bf{d}}_j$. Each ${\bf{d}}_j$ is calculated by \eqref{dl angle-delay path Proposition}. Denote the collection of all ${\bf{d}}_j$ as ${\bf{D}}_a \in {\mathbb{C}^{{N_t}{N_f} \times {N_t}{N_f}}}$ which is a unitary matrix. Similarly we define $\cal{D}$ like the $\cal{S}$ in \eqref{aysmptotic Np} and ${\cal{D}}_p$  like the ${\cal{S}}_p$ in \eqref{appendix M index set of p define}, respectively. The condition ${N_s}=N_P$ makes sure the following relationship holds 
 \begin{equation}\label{appendix M ns condition meaning}
     \left\| {{{\bf{D}}_a^H}{{\bf{h}}^d}\left( t_p \right)} \right\|^2_2{\rm{ = }}\left\| {\sum\limits_{j \in {\cal{D}},p \in {\cal{D}}_p} {\beta _p^d{e^{ - j2\pi {f^d}{\tau _p}}}{e^{jw_p^dt_p}}{{\bf{d}}_j}^H{\bf{r}}_p^d} } \right\|^2_2.
     \nonumber
 \end{equation}
 Using \eqref{dl doppler frequency}, the DL Doppler frequency shift satisfies
 \begin{equation}\label{appendix M doppler dl}
     {e^{j{w^d}\left( j \right)}} = {e^{jw_p^d}},j \in {\cal{D}},p \in {{\cal{D}}_p}.
 \end{equation}
The asymptotic performance of the DL channel prediction is 
\begin{equation}\label{appendix M projection index}
\begin{gathered}
  \mathop {\lim }\limits_{{N_t},{N_f} \to \infty } \varepsilon = \mathop {\lim }\limits_{{N_t},{N_f} \to \infty } \frac{{\left\| {{{\mathbf{D}}_a}\left( {{\mathbf{D}}_a^H{{\mathbf{h}}^d}\left( t_p \right) - {\mathbf{D}}_a^H{{\widetilde {\mathbf{h}}}^d}\left( t_p \right)} \right)} \right\|_2^2}}{{{N_t}{N_f}{{\left( {\sum\limits_P {\left| {\beta _p^d} \right|} } \right)}^2}}} \hfill \\ 
   = \mathop {\lim }\limits_{{N_t},{N_f} \to \infty } \frac{{\left\| {{\mathbf{D}}_a^H{{\mathbf{h}}^d}\left( t_p \right) - {\mathbf{D}}_a^H{{\widetilde {\mathbf{h}}}^d}\left( t_p \right)} \right\|_2^2}}{{{N_t}{N_f}{{\left( {\sum\limits_P {\left| {\beta _p^d} \right|} } \right)}^2}}}. \hfill \\
 \end{gathered}
 \end{equation}
 Using the property of norm, we can relax \eqref{appendix M projection index}
 \begin{small}
  \begin{equation}\label{appendix M projection index relax}
 \begin{gathered}
     \mathop {\lim }\limits_{{N_t},{N_f} \to \infty } \varepsilon \hfill \\
   \leqslant \mathop {\lim }\limits_{{N_t},{N_f} \to \infty } \frac{{\left\{ \begin{gathered}
  \left\| {\sum\limits_{j \in D,p \in {D_p}} {\beta _p^d{e^{ - j2\pi {f^d}{\tau _p}}}{e^{jw_p^dt_p}}{{\mathbf{d}}_j}^H{\mathbf{r}}_p^d} } \right\|_2^2 \hfill \\
   + \left\| {\sum\limits_{j \in D} {{a^d}\left( j \right){e^{j{w^d}\left( j \right)t_p}}} } \right\|_2^2 \hfill \\ 
\end{gathered}  \right\}}}{{{N_t}{N_f}{{\left( {\sum\limits_P {\left| {\beta _p^d} \right|} } \right)}^2}}} \hfill \\
   \leqslant \mathop {\lim }\limits_{{N_t},{N_f} \to \infty } \frac{{{{\left( {\sum\limits_{j \in D,p \in {D_p}} {\left| {\beta _p^d} \right|} } \right)}^2} + {{\left( {\sum\limits_{j \in D} {\left| {{a^d}\left( j \right)} \right|} } \right)}^2}}}{{{N_t}{N_f}{{\left( {\sum\limits_P {\left| {\beta _p^d} \right|} } \right)}^2}}}.\hfill \\
\end{gathered} 
\end{equation}
 \end{small}
Notice that $\frac{{{{\left( {\sum\limits_{j \in D,p \in {D_p}} {\left| {\beta _p^d} \right|} } \right)}^2} + {{\left( {\sum\limits_{j \in D} {\left| {{a^d}\left( j \right)} \right|} } \right)}^2}}}{{{{\left( {\sum\limits_P {\left| {\beta _p^d} \right|} } \right)}^2}}}$ is a constant which is independent with $N_tN_f$. Therefore, \eqref{appendix M projection index relax} becomes
\begin{equation}\label{appendix final}
\begin{array}{*{20}{l}}
  {\mathop {\lim }\limits_{{N_t},{N_f} \to \infty } \frac{{{{\left\| {{{\mathbf{h}}^d}\left( t_p \right) - {{\widetilde {\mathbf{h}}}^d}\left( t_p \right)} \right\|}^2}}}{{{{\left\| {{{\mathbf{h}}^d}\left( t_p \right)} \right\|}^2}}}} \\ 
  { \leqslant \mathop {\lim }\limits_{{N_t},{N_f} \to \infty } \frac{{{{\left( {\sum\limits_{j \in D,p \in {D_p}} {\left| {\beta _p^d} \right|} } \right)}^2} + {{\left( {\sum\limits_{j \in D} {\left| {{a^d}\left( j \right)} \right|} } \right)}^2}}}{{{N_t}{N_f}{{\left( {\sum\limits_P {\left| {\beta _p^d} \right|} } \right)}^2}}} = 0,} 
\end{array}
\end{equation}
 Then Theorem \ref{Thereom M} is proved.
\end{proof}
\section{Proof of Remarkl \ref{remark M}}\label{Appedix remark M}
\begin{proof}
\emph{Proof:} Obviously when ${N_s} = {N_f}{N_t}$, the value of $L$ will not affect ${\varepsilon}$. We only need to prove the case when $N_s$ satisfies ${N_s} < {N_f}{N_t}$. Clearly ${\widetilde {\bf{h}}_1^d}\left( t_p \right)$ is independent with $M$, only ${\widetilde {\bf{h}}_{2}^d}\left( t_p \right)$ need to be analyzed. We have proved that ${\left\| {{{\bf{D}}}{{\bf{E}}}\left( t_p \right){{\left( {{{\bf{S}}^\dag }} \right)}^T}} \right\|_2^2}{ = }N_sM$ in Appendix \ref{Appendix Theroem}.
The expectation part in \eqref{dl channel prediction error two part form} is calculated as
\begin{small}
\begin{equation}\label{Theorem M deduce expectation upper bound}
    \begin{split}
&\mathbb{E}\left\{ {\left\| {\frac{{{{\bf{h}}^d}\left( t_p \right) - \widetilde {\bf{h}}_1^d\left( t_p \right) - \widetilde {\bf{h}}_2^d\left( t_p \right)}}{{{{\bf{h}}^d}\left( t_p \right)}}} \right\|_2^2} \right\}\\
& \le \mathbb{E}\left\{ {\left\| {\frac{{{{\bf{h}}^d}\left( t_p \right) - \widetilde {\bf{h}}_1^d\left( t_p \right)}}{{{{\bf{h}}^d}\left( t_p \right)}}} \right\|_2^2} \right\} + \mathbb{E}\left\{ {\left\| {\frac{{\widetilde {\bf{h}}_2^d\left( t_p \right)}}{{{{\bf{h}}^d}\left( t_p \right)}}} \right\|_2^2} \right\}\\
& \le \mathbb{E}\left\{ {\left\| {\frac{{{{\bf{h}}^d}\left( t_p \right) - \widetilde {\bf{h}}_1^d\left( t_p \right)}}{{{{\bf{h}}^d}\left( t_p \right)}}} \right\|_2^2} \right\} + \mathbb{E}\left\{ {\left\| {\frac{{{\bf{DE}}\left( t_p \right){{\left( {{{\bf{S}}^\dag }} \right)}^T}{\bf{n}}{{\left( t_p \right)}^T}}}{{{{\bf{h}}^d}\left( t_p \right)}}} \right\|_2^2} \right\}\\
&\le \mathbb{E}\left\{ {\left\| {\frac{{{{\bf{h}}^d}\left( t_p \right) - \widetilde {\bf{h}}_1^d\left( t_p \right)}}{{{{\bf{h}}^d}\left( t_p \right)}}} \right\|_2^2} \right\} \\
&+ \mathbb{E}\left\{ {\left\| {{\bf{DE}}\left( t_p \right){{\left( {{{\bf{S}}^\dag }} \right)}^T}} \right\|_2^2\left\| {\frac{{{\bf{n}}{{\left( t_p \right)}^T}}}{{{{\bf{h}}^d}\left( t_p \right)}}} \right\|_2^2} \right\}\\
& \le \mathbb{E}\left\{ {\left\| {\frac{{{{\bf{h}}^d}\left( t_p \right) - \widetilde {\bf{h}}_1^d\left( t_p \right)}}{{{{\bf{h}}^d}\left( t_p \right)}}} \right\|_2^2} \right\} + \mathbb{E}\left\{ {\frac{{{N_s}M{\sigma ^2}}}{{\left\| {{{\bf{h}}^d}\left( t_p \right)} \right\|_2^2}}} \right\}.
    \end{split}        
\end{equation}    
\end{small}
Similarly, we may derive 
\begin{small}
\begin{equation}\label{Theorem M deduce expectation lower bound}
    \begin{split}
&\mathbb{E}\left\{ {\left\| {\frac{{{{\bf{h}}^d}\left( t_p \right) - \widetilde {\bf{h}}_1^d\left( t_p \right) - \widetilde {\bf{h}}_2^d\left( t_p \right)}}{{{{\bf{h}}^d}\left( t_p \right)}}} \right\|_2^2} \right\}\\
 &\ge \mathbb{E}\left\{ {\left\| {\frac{{{{\bf{h}}^d}\left( t_p \right) - \widetilde {\bf{h}}_1^d\left( t_p \right)}}{{{{\bf{h}}^d}\left( t_p \right)}}} \right\|_2^2} \right\} - \mathbb{E}\left\{ {\left\| {\frac{{{\bf{DE}}\left( t_p \right){{\left( {{{\bf{S}}^\dag }} \right)}^T}{\bf{n}}{{\left( t_p \right)}^T}}}{{{{\bf{h}}^d}\left( t_p \right)}}} \right\|_2^2} \right\}\\
 &\ge \mathbb{E}\left\{ {\left\| {\frac{{{{\bf{h}}^d}\left( t_p \right) - \widetilde {\bf{h}}_1^d\left( t_p \right)}}{{{{\bf{h}}^d}\left( t_p \right)}}} \right\|_2^2} \right\} - \mathbb{E}\left\{ {\frac{{{N_s}M{\sigma ^2}}}{{\left\| {{{\bf{h}}^d}\left( t_p \right)} \right\|_2^2}}} \right\}.
    \end{split}
\end{equation}
\end{small}
Thus Remark \ref{remark M} is proved.
\end{proof}
% you can choose not to have a title for an appendix
% if you want by leaving the argument blank
% use section* for acknowledgment

% Can use something like this to put references on a page
% by themselves when using endfloat and the captionsoff option.
\ifCLASSOPTIONcaptionsoff
  \newpage
\fi
\bibliographystyle{IEEEtran}
\bibliography{IEEEabrv,reference}

% biography section
% 
% If you have an EPS/PDF photo (graphicx package needed) extra braces are
% needed around the contents of the optional argument to biography to prevent
% the LaTeX parser from getting confused when it sees the complicated
% \includegraphics command within an optional argument. (You could create
% your own custom macro containing the \includegraphics command to make things
% simpler here.)
%\begin{IEEEbiography}[{\includegraphics[width=1in,height=1.25in,clip,keepaspectratio]{mshell}}]{Michael Shell}
% or if you just want to reserve a space for a photo:

%\begin{IEEEbiography}{Michael Shell}
%Biography text here.
%\end{IEEEbiography}

% if you will not have a photo at all:
%\begin{IEEEbiographynophoto}{John Doe}
%Biography text here.
%\end{IEEEbiographynophoto}

% insert where needed to balance the two columns on the last page with
% biographies
%\newpage

%\begin{IEEEbiographynophoto}{Jane Doe}
%Biography text here.
%\end{IEEEbiographynophoto}

% You can push biographies down or up by placing
% a \vfill before or after them. The appropriate
% use of \vfill depends on what kind of text is
% on the last page and whether or not the columns
% are being equalized.

%

% Can be used to pull up biographies so that the bottom of the last one
% is flush with the other column.
%\enlargethispage{-5in}
\begin{IEEEbiography}[{\includegraphics[width=1in,height=1.25in,clip,keepaspectratio]{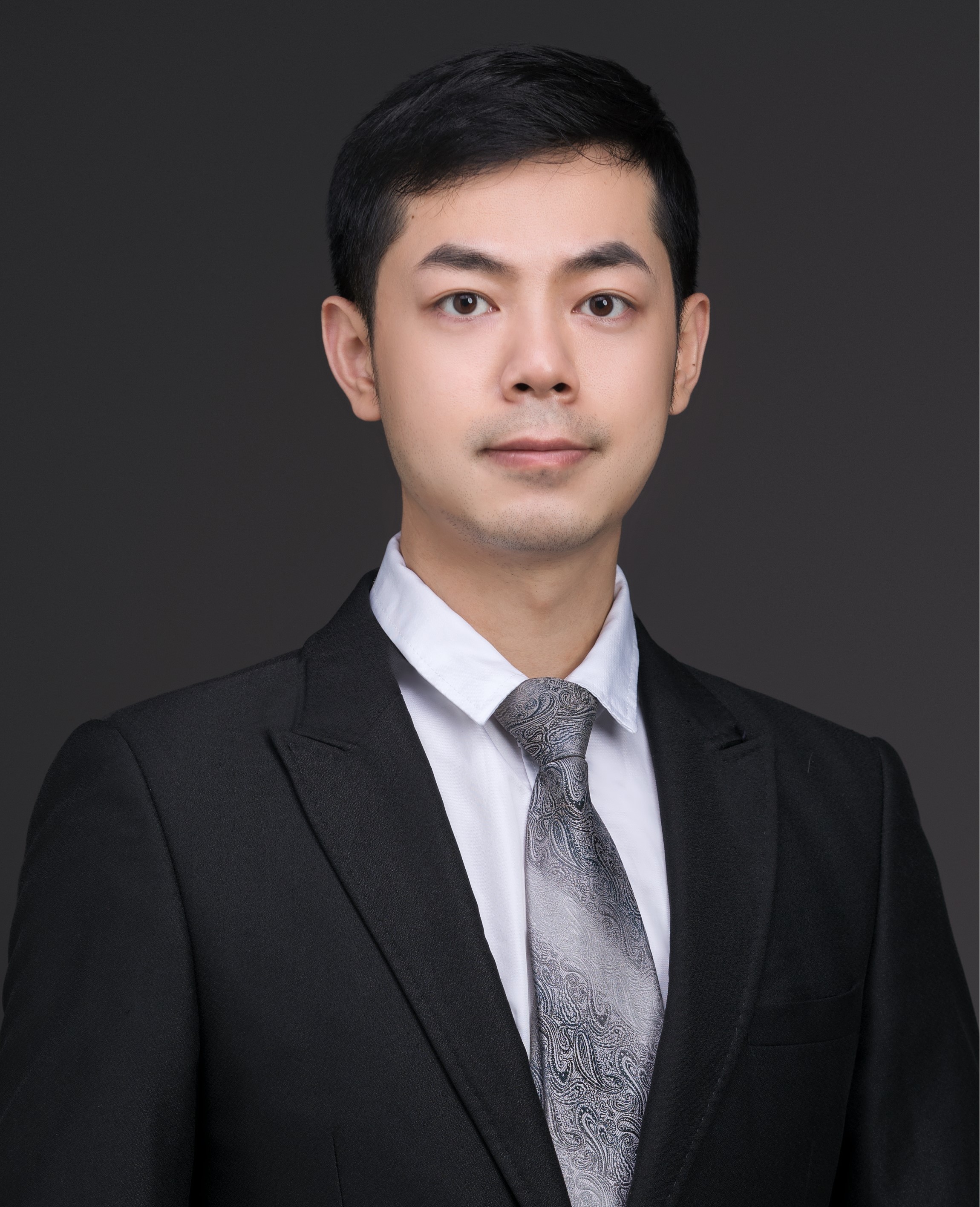}}]
{Ziao Qin} received the B.Sc. degree in Information Engineering from Beijing Institute of Technology, Beijing, China, in 2014. From 2014 to 2017, he works in industry in Beijing. Since 2018, he has been a graduate student at Huazhong University of Science and Technology, Wuhan, China. He is currently pursuing the Ph.D. degree in Information and Communications Engineering. His research interests include channel estimation, signal processing, codebook design, and beamforming for massive MIMO systems.
\end{IEEEbiography}

\begin{IEEEbiography}[{\includegraphics[width=1in,height=1.25in,clip,keepaspectratio]{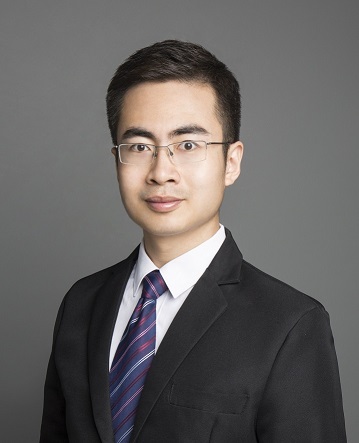}}]
{Haifan Yin} received the Ph.D. degree from T\'el\'ecom ParisTech in 2015. He received the B.Sc. degree in Electrical and Electronic Engineering and the M.Sc. degree in Electronics and Information Engineering from Huazhong University of Science and Technology, Wuhan, China, in 2009 and 2012 respectively. From 2009 to 2011, he has been with Wuhan National Laboratory for Optoelectronics, China, working on the implementation of TD-LTE systems as an R\&D engineer.
From 2016 to 2017, he has been a DSP engineer in Sequans Communications - an IoT chipmaker based in Paris, France. From 2017 to 2019, he has been a senior research engineer working on 5G standardization in Shanghai Huawei Technologies Co., Ltd., where he made substantial contributions to 5G standards, particularly the 5G codebooks. Since May 2019, he has joined the School of Electronic Information and Communications at Huazhong University of Science and Technology as a full professor. 
His current research interests include 5G and 6G networks, signal processing, machine learning, and massive MIMO systems. H. Yin was the national champion of 2021 High Potential Innovation Prize awarded by Chinese Academy of Engineering, a winner of 2020 Academic Advances of HUST, and a recipient of the 2015 Chinese Government Award for Outstanding Self-financed Students Abroad.
\end{IEEEbiography}
\begin{IEEEbiography}[{\includegraphics[width=1in,height=1.25in,clip,keepaspectratio]{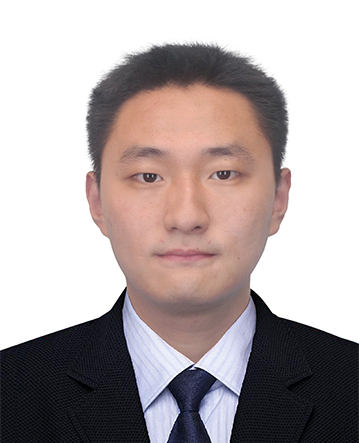}}]
{Weidong Li} received the B.Sc. degree in Electronic Information Science and Technology from Nanjing Agricultural University, Nanjing, China, in 2017, and the M.Sc. degree in Electronic Engineering from Nanjing University of Aeronautics and Astronautics, Nanjing, China, in 2020. He is currently pursuing the Ph.D. degree with the School of Electronic Information and Communications at Huazhong University of Science and Technology, Wuhan, China. His research interests include massive MIMO and signal processing. 
\end{IEEEbiography}

\begin{IEEEbiography}[{\includegraphics[width=1in,height=1.25in,clip,keepaspectratio]{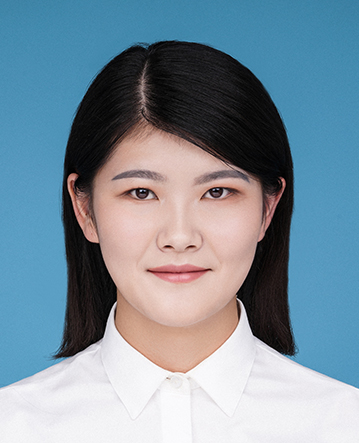}}]
{Yandi Cao} received the B.Sc. degree in Communication Engineering from Chongqing University, Chongqing, China, in 2020. She is currently pursuing the Ph.D. degree with the School of Electronic Information and Communications at Huazhong University of Science and Technology, Wuhan, China. Her research interests include massive MIMO and machine learning.  
\end{IEEEbiography}

\begin{IEEEbiography}[{\includegraphics[width=1in,height=1.25in,clip,keepaspectratio]{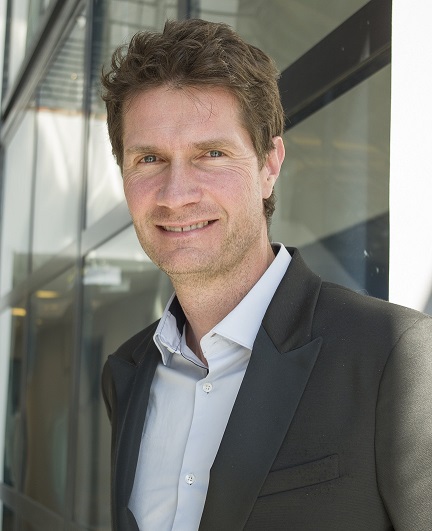}}]
{David Gesbert} (Fellow, IEEE) is Director of EURECOM, Sophia Antipolis, France (www.eurecom.fr). He received the Ph.D. degree from TelecomParis, France, in 1997. From 1997 to 1999, he was with the Information Systems Laboratory, Stanford University. He was the Founding Engineer of Iospan Wireless Inc., a Stanford spin off pioneering MIMO-OFDM (currently Intel). Before joining EURECOM in 2004, he was with the Department of Informatics, University of Oslo, as an Adjunct Professor. He has published about 350 articles and 25 patents, 7 of them winning  IEEE Best paper awards. He has been the Technical Program Co-Chair for ICC2017 and has been named a Thomson-Reuters Highly Cited Researchers in computer science.  He is a Board Member for the OpenAirInterface (OAI) Software Alliance. In 2015, he has been awarded an ERC Advanced Grant. In 2020, he was awarded funding by the French Interdisciplinary Institute on Artificial Intelligence for a Chair in the area of AI for the future IoT. In 2021, he received the Grand Prix in Research from IMT-French Academy of Sciences.
\end{IEEEbiography}
% that's all folks
\end{document}